\documentclass[sigconf, nonacm]{acmart}

\usepackage{multirow}
\usepackage{tikz}
\usepackage{amsmath}
\usepackage{filecontents}
\usepackage{enumitem}
\usepackage{subfigure}
\usepackage{xspace}
\usepackage{cleveref}
\usepackage{listings}
\usepackage{todonotes}
\usepackage{xurl}
\usepackage{booktabs}
\usepackage[normalem]{ulem}
\usepackage{tikz}
\usetikzlibrary{arrows.meta,positioning,calc}

\newcommand*\circled[1]{%
  \tikz[baseline=(char.base)]{%
    \node[shape=circle,draw,fill=black,text=white,inner sep=0.5pt,font=\scriptsize] (char) {#1};%
  }%
}

\newcommand{\system}{\textsc{Remora}\xspace}
\newcommand{\policy}{SFS\xspace}
\newcommand{\router}{\textsf{Coordinator}\xspace}
\newcommand{\worker}{\textsf{Worker}\xspace}
\newcommand{\workers}{\textsf{Workers}\xspace}
\newcommand{\consensus}{\textsf{Consensus}\xspace}

\usepackage[most]{tcolorbox}
\tcbset{ 
    highlight/.style={
            enhanced,                 
            breakable,                
            colback=gray!10,          
            frame hidden,             
            boxrule=0pt,              
            borderline west={2pt}{0pt}{black}, 
            left=2mm, right=2mm, top=1mm, bottom=1mm,
        }
}

\newcommand{\takeaway}[2]{%
    \begin{tcolorbox}[highlight]
        \textbf{Takeaway #1}: \textit{#2}
    \end{tcolorbox}
}

\newcommand{\insight}[2]{%
    \begin{tcolorbox}[highlight]
        \textbf{Insight #1}: {#2}
    \end{tcolorbox}
}

\usepackage[linesnumbered,boxruled,vlined]{algorithm2e}

\RestyleAlgo{ruled}         
\SetAlgoNlRelativeSize{-1}  

\newcommand{\algline}[2]{Algorithm~\ref{#1}, Line~\ref{#2}}

\usepackage[font=small,labelfont=bf]{caption}
\newtheorem{theorem}{Theorem}
\newtheorem{lemma}[theorem]{Lemma}
\newtheorem{remark}[theorem]{Remark}

\DontPrintSemicolon
\SetKwProg{Proc}{Procedure}{}{}
\SetFuncSty{normalfont}

\SetKwFunction{OnProposedBlock}{\textsc{OnProposedBlock}}
\SetKwFunction{OnCommittedBlock}{\textsc{OnCommittedBlock}}
\SetKwFunction{PlanBlock}{\textsc{PlanBlock}}
\SetKwFunction{RunStateless}{\textsc{RunStateless}}
\SetKwFunction{Dispatch}{\textsc{Dispatch}}
\SetKwFunction{OnStateless}{\textsc{OnStateless}}
\SetKwFunction{OnStateful}{\textsc{OnStateful}}
\SetKwFunction{EnsureLocal}{\textsc{EnsureLocal}}
\SetKwFunction{OnEpochTrigger}{\textsc{OnEpochTrigger}}
\SetKwFunction{OnWorkerFailure}{\textsc{OnWorkerFailure}}
\SetKwFunction{ScaleOut}{\textsc{ScaleOut}}
\SetKwFunction{ScaleIn}{\textsc{ScaleIn}}
\SetKwFunction{EnsureOwnership}{\textsc{EnsureOwnership}}

\SetKwFunction{MatchesProposal}{\texttt{MatchesProposal}}
\SetKwFunction{Adopt}{\texttt{Adopt}}
\SetKwFunction{AssignVersions}{\texttt{AssignVersions}}
\SetKwFunction{BuildDependencies}{\texttt{BuildDependencies}}
\SetKwFunction{Partition}{\texttt{Partition}}
\SetKwFunction{PlaceUnits}{\texttt{PlaceUnits}}
\SetKwFunction{ScheduleStateless}{\texttt{ScheduleStateless}}
\SetKwFunction{PlanVers}{\texttt{PlanVers}}
\SetKwFunction{PlanPlace}{\texttt{PlanPlace}}
\SetKwFunction{PlanSources}{\texttt{PlanSources}}
\SetKwFunction{SendToWorker}{\texttt{SendToWorker}}
\SetKwFunction{Validate}{\texttt{Validate}}
\SetKwFunction{PublishToken}{\texttt{PublishToken}}
\SetKwFunction{AwaitAllAvailable}{\texttt{AwaitAllAvailable}}
\SetKwFunction{AwaitToken}{\texttt{AwaitToken}}
\SetKwFunction{Invalid}{\texttt{Invalid}}
\SetKwFunction{ConsumeNoOp}{\texttt{ConsumeNoOp}}
\SetKwFunction{Read}{\texttt{Read}}
\SetKwFunction{Execute}{\texttt{Execute}}
\SetKwFunction{Write}{\texttt{Write}}
\SetKwFunction{Ack}{\texttt{Ack}}
\SetKwFunction{Owner}{\texttt{Owner}}
\SetKwFunction{HasVersion}{\texttt{HasVersion}}
\SetKwFunction{AuthorizeTransfer}{\texttt{AuthorizeTransfer}}
\SetKwFunction{LatestLeq}{\texttt{LatestLeq}}
\SetKwFunction{Fetch}{\texttt{Fetch}}
\SetKwFunction{PeerFetch}{\texttt{PeerFetch}}
\SetKwFunction{FetchFromRouter}{\texttt{FetchFromRouter}}
\SetKwFunction{SetOwner}{\texttt{SetOwner}}
\SetKwFunction{RequestSnapshot}{\texttt{RequestSnapshot}}
\SetKwFunction{CollectAndPersist}{\texttt{CollectAndPersist}}
\SetKwFunction{LatestSnapshotEpoch}{\texttt{LatestSnapshotEpoch}}
\SetKwFunction{SpawnWorkers}{\texttt{SpawnWorkers}}
\SetKwFunction{Seed}{\texttt{Seed}}
\SetKwFunction{LogRange}{\texttt{LogRange}}
\SetKwFunction{Replay}{\texttt{Replay}}
\SetKwFunction{PromoteReplacement}{\texttt{PromoteReplacement}}
\SetKwFunction{SpawnWorker}{\texttt{SpawnWorker}}
\SetKwFunction{BiasStateless}{\texttt{BiasStateless}}
\SetKwFunction{GraduallyPlaceStateful}{\texttt{GraduallyPlaceStateful}}
\SetKwFunction{StopPlacingStateful}{\texttt{StopPlacingStateful}}
\SetKwFunction{WaitForLeasesToTransferOrExpire}{\texttt{WaitForLeasesToTransferOrExpire}}
\SetKwFunction{Shutdown}{\texttt{Shutdown}}

\begin{document}

\date{}

\title{\system: Scale-out Deterministic Execution for Smart Contracts}

\author{
  Zhengqing Liu$^{1}$, Alberto Sonnino$^{2,3}$, Igor Zablotchi$^{2}$, Eleftherios Kokoris Kogias$^{2}$, Marios Kogias$^{1}$
}

\affiliation{
  \institution{$^{1}$Imperial College London, $^{2}$Mysten Labs, $^{3}$University College London}
  \country{}
}


\begin{abstract}
Modern blockchains rely on a modular architecture that decouples consensus from execution.
Recent advances in consensus algorithms have shifted the bottleneck to the execution layer, which must deterministically follow the consensus order and handle increasingly complex, compute-intensive smart contracts.
We identify that single-node validators cannot keep up, motivating the need for a scale-out design.

We design \system, a scale-out smart contract execution engine.
\system adopts an efficient asymmetric architecture with centralized transaction dispatching and distributed execution, and depends on an object versioning scheme with a strict ownership model to guarantee deterministic scale-out execution.
\system achieves up to 3$\times$ throughput improvement compared to state-of-the-art deterministic execution schemes, scales up to 250k TPS, matching modern consensus performance, and reduces latency by up to 5$ms$.
We also show that \system elastically adapts to bursty workloads and dynamic access patterns using real-world traces.
\system's main performance benefits come from a novel stateless-stateful separation during smart contract execution, which overlaps the execution of state-independent tasks with consensus, and a new locality-aware and load-balanced scheduling scheme.
\end{abstract}

\maketitle
\pagestyle{plain}

\section{Introduction}\label{sec:intro}

Modern blockchains are operated by \textit{validators}, which adopt a layered architecture (Figure~\ref{fig:blockchain-arch}): the \textit{consensus} layer establishes a globally ordered sequence of transactions, and the \textit{execution} layer performs transaction execution.
While enormous progress has been made in scaling consensus, with production systems today sustaining 200k–300k transactions per second (TPS)~\cite{mysticeti,shoal++}, the execution layer has failed to keep pace.
The need for faster execution is amplified by the rise of smart contracts, which drive new demands from decentralized applications spanning finance, gaming, and identity~\cite{dappradar2023games, okta2024decentralizedidentity, qin2023blockchain, heavy-stateless-motivation}.
Thus, execution, not consensus, now defines the scalability frontier for blockchains.

Despite advances in exploiting multi-core parallelism on smart contract VMs~\cite{pevm, parallelevm, crystality, blockstm}, \textit{scale-up} execution alone cannot meet the computational demands of modern smart contracts.
As smart contract logic and cryptographic authentication grow more expensive, due to techniques like zero-knowledge proofs and post-quantum cryptography~\cite{algorandPostQuantum2025,fastcrypto, a16z_quantum_blockchains}, scaling-up becomes impractical as computational needs exceed the capacity of single-node validators.
Addressing this bottleneck requires a shift to \textit{\textbf{scale-out}} designs that distribute validator's execution across multiple machines.


\begin{figure}[t]
  \centering
  \resizebox{\columnwidth}{!}{%
  \begin{tikzpicture}[
      font=\footnotesize,
      line cap=round,
      line join=round,
      >={Latex[length=3.2mm,width=2.2mm]},
      block/.style={
        draw,
        thick,
        rounded corners=7pt,
        minimum height=0.52cm,
        align=center,
        inner sep=3pt
      }
    ]

    \node[block, minimum width=3.7cm] (cons) at (4.8,0) {\normalsize Consensus};
    \node[block, minimum width=2.4cm, right=0.65cm of cons] (exec) {\normalsize Execution};

    \draw[->,thick] ($(cons.west)+(-1.05,0)$) -- (cons.west);
    \draw[->,thick] (cons.east) -- (exec.west);

    \node[anchor=south,font=\small] at ($(cons.west)+(-0.75,0.12)$) {Client};
    \node[anchor=north,font=\small] at ($(cons.west)+(-0.75,-0.12)$) {Txns};

    \node[font=\small] at ($(cons.north)+(0,0.24)$) {200\textasciitilde300k TPS};
    \node[font=\small] at ($(exec.north)+(0,0.24)$) {\textasciitilde10k TPS};

    \coordinate (wL) at ($(cons.south west)+(0,-0.10)$);
    \coordinate (wR) at ($(cons.south east)+(0,-0.10)$);

    \node[font=\small] at ($0.5*(wL)+0.5*(wR)+(0,-0.14)$) {\textasciitilde300ms window};

    \node[font=\small] at ($(exec.south)+(0,-0.22)$) {\textasciitilde1ms};

  \end{tikzpicture}%
  }
  \caption{\textbf{Validator architecture in modern blockchains.}}
  \label{fig:blockchain-arch}
\end{figure}

A core requirement of blockchain execution that makes the problem more challenging compared to prior work on distributed transaction processing~\cite{redt,h-store, turbodb}, is \textit{\textbf{strict determinism}}~\cite{spectrum}, i.e., preserving the total order established by the consensus layer during execution even in the presence of parallel or distributed executors.
In blockchains, replicas are mutually untrusted, and transaction order often carries financial significance in applications like auctions and flash loans.
Enforcing the consensus-established order is essential because it guarantees fairness, preserves transparency, and ensures that execution remains verifiable and tamper-proof~\cite{chaindash, auction}.


In this work, we ask: \textit{how to efficiently scale-out smart contract execution while preserving strict determinism?}
Revisiting prior work on deterministic transactional systems, we identify two fundamental design pitfalls.
First, all prior deterministic databases adopt a \textit{symmetric} architecture, that duplicates the full control stack, i.e., sequencing and scheduling all transactions, on every node, while only performing execution for a fraction of them~\cite{calvin, delaypart, qstore, slog, detock}.
This redundancy inflates the coordination overhead and leads to limited scalability.
Second, there is currently no distributed transaction-scheduling scheme that jointly satisfies strict determinism, locality awareness, and load balance, properties that are necessary for efficient scale-out smart contract execution.
Load-driven schemes trigger excessive state migration and coordination for distributed transactions, while locality-driven ones concentrate records on a single node~\cite{leap, zeus}.
Recent work attempts to balance both~\cite{delaypart, hermes}, but often breaks strict determinism via reordering, which is unacceptable to blockchain systems.

Beyond identifying design pitfalls in prior work, we further observe two new domain-specific opportunities by leveraging block\-chain characteristics.
First, smart contract execution naturally comprises two steps: a \emph{stateless} one, performing verification and authentication, which is compute-intensive but independent of shared state; and a \emph{stateful} one, performing business logic of smart contracts, which thus requires access to the blockchain state but tends to be lightweight.
Accessing shared state introduces dependencies and ordering constraints across transactions.
However, we argue that the stateless parts have no such dependencies and can be executed independently on any available compute resources, outside the critical dependency boundaries.
Second, the stability of existing consensus algorithms introduces a \emph{window} of opportunity between transaction proposal and commitment with high predictability in proposal outcomes, which allows for speculative off-path execution to perform useful work ahead of time~\cite{forerunner}.
We pinpoint the parts of the validator execution that can be efficiently performed during the consensus window, without incurring excessive state transfers and cascading aborts in a scale-out setting.

Driven by these findings, we design \system, a scale-out execution engine for blockchain systems that preserves strict determinism.
\system adopts an \emph{asymmetric} architecture, with a single \router managing a pool of \workers to decouple scheduling from execution.
The \router receives the globally ordered transaction stream from consensus and dispatches each transaction to a \worker for execution.
To enforce strict determinism, \system introduces \textit{per-object versioning} and a \textit{lease-based ownership} model: each versioned object is exclusively owned by either the \router or a single \worker at any point in time.
Each transaction is annotated with an explicit read/write set, and the \router assigns object versions in consensus order.
When a transaction requires objects owned by a different \worker, \system transfers leases and object state accordingly.
This design simplifies the enforcement of determinism and, by allowing dynamic flow of object ownership, enables scheduling optimizations that collocate related transactions and improve locality.

\system decouples the execution of stateless and stateful parts, proposing \textit{pre-consensus stateless execution and stateful scheduling} to gain both latency and throughput benefits.
The separation allows compute-heavy stateless parts to harness available resources within the cluster.
To handle stateful parts under contended and skewed workloads, \system introduces a \textit{subgraph-first scheduling (\policy)} scheme that focuses on scheduling disconnected subgraphs within the transaction dependency graph, jointly considering transaction ordering, load distribution, and data locality.
To tolerate worker failures, \system further employs periodic snapshots to persist batched updates.
\system also implements a \router-driven elasticity scheme that adds or removes nodes to the cluster and depends on lazy state transfers to the new nodes.

We evaluate \system using transactional benchmarks and real-world smart-contract traces, showing up to \textbf{$3\times$} throughput improvement compared to state-of-the-art deterministic execution scheme~\cite{hermes}, while effectively masking the latency of both stateless execution and stateful scheduling in consensus window, thus reducing end-to-end latency by up to 5ms.
\system's asymmetric architecture also demonstrates higher efficiency than traditional deterministic databases~\cite{calvin, detock} as system scales, thus reducing deployment cost.
Our implementation sustains over 250k TPS even in the presence of split execution and the computationally heavy \policy scheduling, matching the modern consensus throughput.

This paper makes the following contributions:
\begin{itemize}[noitemsep,topsep=1pt,leftmargin=*]
  \item
    Two blockchain-specific design opportunities (\S\ref{sec:insight}), i.e., decoupling stateless and stateful execution, and enabling pre-consensus stateless execution and stateful scheduling to improve performance in a scale-out setting (\S\ref{sec:design:separation}, \S\ref{sec:design:consensus-window}).
  \item
    An object-versioning and lease-based ownership model to preserve strict determinism in a distributed setting and simplify deterministic parallel execution in each \worker (\S\ref{sec:design:determinism}).
  \item
    \policy, a subgraph-first scheduling scheme that jointly considers load, locality, and dependency ordering for efficient execution of smart contract workloads (\S\ref{sec:design:policy}).
  \item
    \system, a scale-out execution engine for smart contracts that adopts an asymmetric architecture, improving efficiency over prior deterministic distributed systems and addressing blockchain-specific requirements (\S\ref{sec:design:arch}).

\end{itemize}

\section{Background and Motivation}\label{sec:background}
In this section, we describe the high-level architecture of modern blockchains, identify the scalability bottleneck, and motivate the need for scale-out.
We then outline the unique requirements of blockchain infrastructure and explain why existing (non-) deterministic distributed transaction systems fall short of meeting them.

\subsection{Blockchain Architecture and Bottlenecks}
\begin{figure}[t]
    \centering
    \includegraphics[width=\columnwidth]{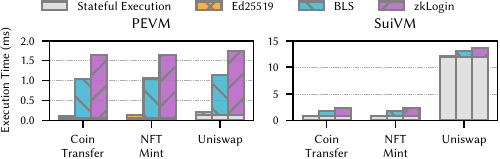}
    \caption{
        Execution breakdown of common smart contracts.
    }
    \label{fig:example-contracts}
\end{figure}

Modern blockchains typically adopt modular architectures that decouple consensus from execution~\cite{solana-code,sui-code,aptos-code}.
Validators first organize transactions in blocks and propose them to the network.
These blocks are then ordered by a consensus protocol before being executed deterministically by the validators' execution engine.
This decoupling enables independent evolution of the two layers and allows diverse combinations of technologies.
For example, both Aptos~\cite{aptos} and Cronos~\cite{cronos} adopt BlockSTM~\cite{blockstm} for execution, despite relying on different consensus protocols.

This architectural decoupling also means that either component, consensus or execution, can independently become a bottleneck.
Although early consensus protocols only scaled to a few hundreds or thousands of TPS~\cite{pbft}, modern designs now sustain over 300k TPS by allowing validators to propose transactions in parallel and leveraging state-of-the-art BFT protocols~\cite{mysticeti, shoal++}.
As a result, \textit{the bottleneck has now shifted to execution}.
Despite extensive parallelization efforts~\cite{sui,heimbach2023defi,hay2024batch, doradd, blockstm},
end-to-end throughput in practice remains at tens of kTPS due to the smart contract complexity and the overhead of the underlying VMs~\cite{sui-lutris}.

To better understand the execution bottleneck, Figure~\ref{fig:example-contracts} shows the average execution time of several smart contracts on two virtual machines~\cite{pevm, sui-code}.
Depending on the contract logic and the verification method, the execution can take up to several milliseconds of compute time.
Even under ideal assumptions with a modest execution time of 1 ms per transaction, uniform load with limited inter-transaction dependencies, Little’s Law implies that sustaining 300k TPS (the throughput of state-of-the-art consensus protocols) would require roughly 300 cores dedicated to execution.
This far exceeds the capacity of typical validator hardware~\cite{sui-min-specs,solana-min-specs,aptos-min-specs}.


A broad line of work tackles such bottlenecks via \emph{blockchain sharding}, where the blockchain state (and hence execution) is partitioned across shards maintained by different validator committees in parallel~\cite{sharper,sigmod19-sharding,omniledger,byshard,arete,gridb,marlin}.
However, sharding introduces recurring challenges.
First, it relies on subcommittee formation through random sampling which needs stronger adversarial assumptions to keep every shard secure with high probability~\cite{omniledger,sigmod19-sharding,arete}.
Second, transactions that span shards incur substantial overhead due to Byzantine-resilient cross-shard coordination (often akin to atomic commit), which can dominate performance under realistic cross-shard access patterns~\cite{sharding-security}, since state migration requires subcommittee-to-subcommittee BFT coordination.
These challenges make sharding operationally complex, and recent production roadmaps have decided to shift away from sharding~\cite{ethereum_roadmap}.

We argue there is an alternative scale-out approach: \emph{scale out execution within a single validator}.
Rather than repartitioning state across multiple validator committees, we can focus on the execution bottleneck of a single validator and distribute the workload across a cluster of machines, while preserving the consensus protocol, validator set, and trust assumptions.
This design allows a validator, which is a single administrative and trust domain, to scale its execution capacity elastically with available compute resources, narrowing the widening gap between consensus throughput and single-node execution.

\takeaway{1}{The computational need of modern smart contracts goes beyond the capacity of single-node validators, making the scale-out design necessary.
Scale-out intra-validator execution can retain the trust model and operational requirements of modern blockchains, unlike inter-validator sharding approaches.
}

\subsection{Design Requirements and Prior Work Limitations}
\label{sec:design-requirements}
Focusing on a scale-out architecture of a \emph{single validator}, we enumerate the key requirements that guide such a design.
To do so, we revisit related work across blockchains and distributed databases/systems to highlight where prior approaches fall short and to surface the practical considerations for real deployments.

\noindent\underline{\textbf{R1: Strict Determinism.}}
Preserving the consensus-established order, i.e., strict determinism~\cite{spectrum}, is the fundamental correctness requirement for blockchain execution.
Given a globally agreed-upon transaction order, a strictly deterministic execution layer guarantees that the resulting effects are identical to those of sequential execution.
This corresponds to \emph{strictly-deterministic serializability}~\cite{spectrum}: for an order of transactions $O=\langle T_1,\dots,T_n\rangle$, the execution effect is equivalent to sequentially executing and committing $O$ in that order.
Other prior work also formalizes the specific total order into the correctness target known as \textit{Byzantine Ordered Consensus}~\cite{byzantine-ordered-consensus, ordered-consensus-equal}.
This requirement sharply differentiates blockchains from conventional deterministic systems within a trusted administrative domain, where controlled reordering can be used to improve performance by reducing dependencies or object transfers~\cite{aria, hermes, delaypart}.


In a blockchain setting, strict adherence to the consensus order is critical for two reasons.
First, replicas are operated by mutually untrusted parties, and the established order of transactions can have direct financial consequences in order-sensitive applications (e.g., auctions, flash minting)~\cite{chaindash, auction}.
Second, blockchains are designed as permanent, decades-spanning ledgers requiring continuous public auditability.
If execution engines were allowed to apply different reorderings over time, validators would need to record additional per-block metadata to reproduce the chosen order, complicating verification and weakening the replayability.
Strict adherence to consensus order is thus the current industry standard.

\noindent\underline{\textbf{R2: Efficiency and Cost-Effectiveness.}}
\label{sec:background:efficiency}
A scale-out execution layer should convert added machines into execution throughput, not duplicated control-plane work.
Calvin~\cite{calvin} pioneered distributed deterministic execution by enforcing a global transaction order to eliminate two-phase commit.
In Calvin, every node receives the fully replicated, ordered batch of transactions, even though each node manages only a partition of the object space.
To maintain determinism, nodes process only the transactions relevant to their partition, strictly following the global order.
Transactions execute in-place on partitioned state, with remote reads and coordination handled during batched phases.

Calvin-style designs have been widely followed by later systems~\cite{hermes, delaypart, qstore, slog, detock, peep}, which keep a \emph{symmetric} architecture: every node sequences, schedules, and executes.
This preserves determinism but duplicates work, i.e., each node processes the full batch of transactions, runs the same dependency analysis, and schedules, even though it executes only a subset.
The result is wasted CPU, higher coordination overhead, and higher per-node provisioning cost (full metadata/control stack on every machine).
As the cluster or throughput grows, these replicated costs dominate, so scale-out shows diminishing returns (quantified in \S\ref{sec:eval-arch}).

\noindent\underline{\textbf{R3: Workload Adaptiveness.}}
\label{sec:background:workload-adaptiveness}
A scale-out execution layer should adapt online to workload skew and hotspots, minimizing distributed transactions by favoring locality, while still balancing loads.
Prior work shows that performance in distributed databases is highly sensitive to data partitioning~\cite{hermes, delaypart}, as it determines the fraction of distributed versus single-node transactions.
Although offline profiling or periodic repartitioning can improve locality~\cite{clay, delaypart}, real workloads are dynamic and hard to predict~\cite{morphdag,schain,flexchain}.
Even workload-driven repartitioning and live object migration~\cite{morphosys, schism} can fall short in complex deployments, as prior work~\cite{hermes} highlights.
Hence, recent systems combine repartitioning with on-demand migration so that objects flow across nodes according to the workload needs as part of transaction processing~\cite{zeus, leap, hermes}.

However, freely allowing object migrations across nodes can lead to further inefficiencies.
On one hand, load-driven policies, which try to equalize the load among nodes, trigger frequent state transfers and “ping-pong” effects~\cite{chimera, hermes}, causing stalls and inefficiency.
On the other hand, locality-driven policies, which seek to minimize object transfers, collapse active records onto a single node in the presence of skewed workloads, as shown in \S\ref{sec:eval-policy}.
Skewed workloads are the norm in modern blockchains~\cite{ethereum-conflicts-graphed,parallelevm}.
Some designs consider both locality and load, but are either non-deterministic~\cite{rtsfaas} or heavily rely on reordering~\cite{hermes,delaypart}, thus preventing their use in blockchain systems that require strict determinism.

\noindent\underline{\textbf{R4: Fault Tolerance.}}
Scale-out execution introduces additional failure modes: execution nodes may crash or be temporarily unavailable.
Our goal is to scale \emph{intra-validator} execution, without changing the consensus layer.
BFT remains provided by the blockchain’s consensus across validators, and our design should not change its fault threshold or safety guarantees.
Given that one validator is a single administrative and trust domain, we target \emph{crash fault tolerance} for execution nodes.
Specifically, we aim to preserve availability and avoid reducing a validator's mean-time-to-failure (MTTF) due to the increased number of components that can fail independently, rather than to defend against intra-validator compromise.

\noindent\underline{\textbf{R5: Elastic Autoscaling.}}
Production workloads are bursty and time-varying.
Thus, a practical execution layer should be elastic to scale resources up/down and adapt placement online without disruptive global reconfiguration.
Past permissioned systems~\cite{flexchain, adachain, fab} emphasize the need for elasticity, yet they focus on the \emph{whole chain} and make substantial and potentially non-compatible changes to the execution, architecture, and all validators.
This work instead targets \emph{single-validator} elasticity arguing that each validator should scale its execution capacity independently without changing the consensus layer or affecting other validators.

\takeaway{2}{Scale-out blockchain execution requires a careful co-design for strict determinism, architecture efficiency, workload adaptiveness, fault tolerance, and elasticity as independent design choices might lead to requirement violations.}

\section{\system Insights}
\label{sec:insight}

Before diving into the \system design, we highlight two crucial observations that are unique to blockchain systems and substantially influence the proposed architecture.

\subsection{Stateless-Stateful Separation}

We observe that smart contract execution naturally decomposes into two components: \emph{stateless} operations, dominated by cryptographic verification of transaction authenticity, and \emph{stateful} operations, which execute the contract logic and modify the blockchain state.
This separation creates an opportunity to scale stateless work independently of stateful execution.
Stateless work is completely independent of the smart contract logic or the blockchain state, imposing no ordering or placement constraints.
Hence, it can be executed on any available compute resource in the validator cluster.

\Cref{fig:example-contracts} breaks down the execution time of three representative smart contracts: a simple transfer, an NFT mint, and a Uniswap~\cite{lo2021uniswap} trade.
Two smart contract VMs are used: SuiVM~\cite{sui-code}, a production-grade MoveVM powering the Sui~\cite{sui} blockchain, and PEVM~\cite{pevm}, a recent parallel EVM implementation written in Rust.
We pair each stateful part with one of three primary authentication methods~\cite{fastcrypto} in modern blockchains: (1) simple signatures (e.g., Ed25519~\cite{brendel2021provable} or BLS~\cite{bls}), (2) key-less authentication via external identity providers (e.g., zkLogin~\cite{zklogin}), and (3) multi-signatures combining several instances of these methods.
Differences in stateful execution time across VMs reflect their varying runtime overheads and optimizations—lighter VMs reduce execution costs by design, making the stateful component more efficient.

We observe that stateless compute often dominates total execution time of the smart contract: 1\,ms for BLS~\cite{li2023performance,fastcrypto-benchmarks}, 1.6\,ms for zkLogin~\cite{fastcrypto-benchmarks,groth16-benchmark}, and up to 16\,ms for multi-signatures involving ten zkLogin proofs.
While simple signatures remain most common, key-less authentication is rapidly gaining adoption.
We observe roughly 10k zkLogin-based transactions per day on a production blockchain validator~\cite{sui}.
Multi-signatures are also used, albeit less frequently, with about three transactions per minute requiring verification of multi-signatures comprising two to ten signatures or zkLogin proofs.
As on-chain workloads expand to include post-quantum cryptography, the stateless portion of execution is expected to become increasingly compute-intensive~\cite{heavy-stateless-motivation,algorandPostQuantum2025, a16z_quantum_blockchains}.
\insight{1}{Smart contract execution consists of a stateless and a stateful part, with the stateless portion being increasingly compute-intensive yet runnable without ordering or placement constraints on any available compute resource.
}

\subsection{Consensus Window}

Prior work~\cite{forerunner} has identified there is a window between when a transaction is proposed and when it is finally committed that can be leveraged for speculative execution to improve performance of the blockchain.
In state-of-the-art production-grade consensus systems~\cite{mysticeti}, this window can last over 300ms, which is a substantial amount of time compared with smart contract execution time.

To evaluate the opportunity, we study existing blockchain consensus algorithms and deployed systems and observe the following.
Existing consensus protocols exhibit a high degree of predictability in practice: proposals are committed in the expected order in the overwhelming majority of cases.
Deviations occur only when
(i) a block proposal is invalid (e.g., due to faulty validators), or
(ii) network asynchrony delays or reorders commitments.
Empirical evidence from Sui's L1 blockchain infrastructure shows that 98.66\% of blocks are committed directly in the proposed order~\cite{sui}.
Moreover, after a block proposal collects a quorum of votes~\cite{pbft} in the first phase of the protocol~\cite{pbft,hotstuff,bullshark,mysticeti}, only extreme network asynchrony can prevent it from being committed as expected.

Leveraging the consensus window becomes considerably more challenging in the context of scale-out execution.
Prior work~\cite{forerunner,seer} exploits this window for speculative smart contract execution.
This is feasible in their setting because they assume a single-node validator, which holds the entire blockchain state locally.
In a scale-out setting, however, state is partitioned across nodes.
Speculative execution would then require speculative object transfers across partitions to resolve dependencies.
These transfers not only impose significant communication cost, but also create the risk of cascading aborts: a single misprediction in the final commit order can invalidate multiple speculative transfers and their dependent computation across nodes.
The combined cost of speculative state movement and large-scale rollbacks can easily outweigh the performance gains, rendering naive scale-out execution impractical.

To still harvest the benefits of the consensus window, we identify two parts of validator execution that can be performed speculatively without requiring access to the underlying blockchain state.
These are: (i) the stateless part of a smart contract and (ii) the scheduling logic that determines which transactions each node will execute.
The first requires only the transaction input, while the second relies on placement metadata and prior scheduling decisions.
Crucially, neither depends on the distributed blockchain state.
By only optimistically performing these state-independent components, we capture the benefits of the consensus window without costly transfers or cascading aborts.

\insight{2}{
In a scale-out design, validator tasks that can be efficiently performed in the consensus window are stateless execution and transaction scheduling.}

\section{The \system Design}\label{sec:design}
\begin{figure*}
	\centering
	\includegraphics[width=2\columnwidth]{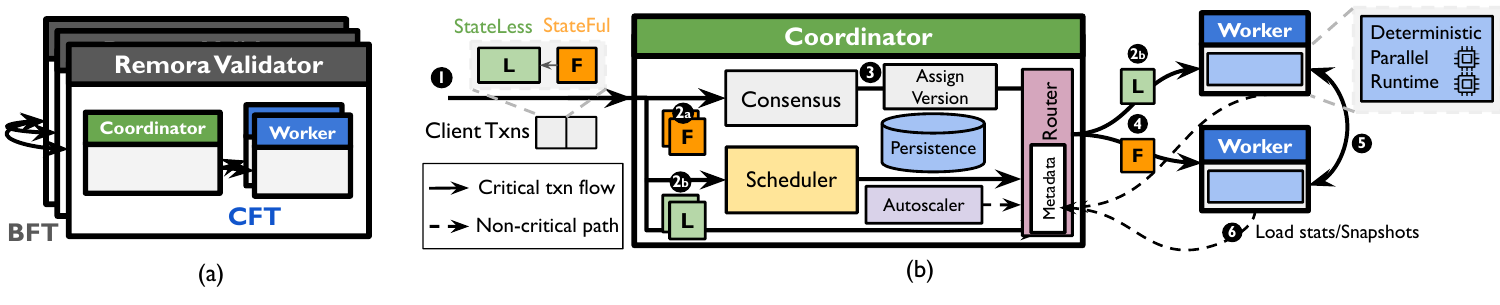}
	\caption{\system Architecture. {\system is a single validator inheriting Byzantine Fault Tolerance (BFT) from the consensus layer (a), and internally scales out execution (b) with a centralized \router and a pool of \workers under Crash Fault Tolerance (CFT).}}
	\label{fig:arch}
	\vspace{-0.4cm}
\end{figure*}

We design \system, a scale-out execution engine for a single validator, to cater to the needs of modern blockchain infrastructure.
\system should:
(i) follow a modular architecture that decouples consensus from smart contract execution and preserve strict determinism;
(ii) leverage stateful and stateless execution separation;
(iii) allow to implement various pre- and post-consensus scheduling algorithms;
(iv) enable seamless elasticity by dynamically adapting to load without upfront costs;
(v) tolerate execution node failures within a trusted validator.
We provide formalized proofs in the \hyperref[appendix]{Appendix}.

\subsection{High-Level Overview}
\label{sec:design:arch}

\noindent\textbf{Architecture.}
\system adopts an \textit{asymmetric} architecture that splits validator functionality across specialized nodes, in contrast to prior symmetric designs (\S\ref{sec:background:efficiency}) where every node redundantly performs ordering, scheduling, and execution.
A single \router node handles sequencing and scheduling, while a pool of \worker nodes split the object state among them and execute smart contracts.
The \router is the only participant in the consensus protocol, from which it receives the ordered sequence of transactions.
Then, based on the scheduling logic, it dispatches transactions to the \workers for deterministic scale-out execution.
The \router also manages elasticity and resource allocation, durably persisting execution results, and coordinating failure recovery.

\noindent\textbf{Workflows.}
Figure~\ref{fig:arch} gives an overview of the main components and the transaction execution workflow in \system.
The \router receives transactions from clients (\circled{1}), and the consensus module runs a sequencing protocol to determine their global order according to the specific blockchain deployment.
\system separates the stateless and stateful components early: it forwards the stateful part to the scheduler for speculative planning (\circled{2a}), and dispatches the stateless part directly to \workers according to the scheduling policy (\circled{2b}).
Both steps leverage the consensus window to overlap with consensus and reduce latency.
After consensus finalizes the transaction sequence, \system assigns object versions to each transaction (\circled{3}) to enforce a deterministic execution schedule preserving the established order.
The \router then routes the stateful components to \workers using the scheduler’s precomputed plan.
\workers execute only the transactions that the \router assigns to them (\circled{4}) and communicate with their peers when remote state is needed (\circled{5}).
They also periodically report load and execution state snapshots (\circled{6}), enabling autoscaling to adjust cluster size based on utilization.
The persistence module in the \router safely stores these snapshots to ensure crash recovery for \workers, alongside the transaction logs determined by the consensus.

This asymmetric design effectively fulfills our design goals.
First, the consensus layer is agnostic to \system, making the architecture reusable across different blockchain deployments.
Second, concentrating scheduling in the \router avoids duplicating scheduling logic at every \worker and enables new policies without \worker changes.
Third, with its global view of cluster utilization, the \router can make accurate autoscaling decisions.
Finally, persisting results asynchronously combined with deterministic execution allows for efficient \worker fault tolerance.


\noindent\textbf{Threat Model.}
\system scales out execution within a single trust domain, the validator.
A \system deployment should therefore be viewed as \textit{a single consensus participant} implemented by a \router and a pool of internal \worker machines.
The \router is the validator's sole externally visible interface to consensus, and any fault suffered by the \router is applied as a total domain-wide failure as far as the blockchain protocol is concerned.
\workers, in contrast, are internal execution resources under the same administrative trust domain as the \router (analogous to threads/cores in a scale-up validator), so we assume they are trusted but crash-stop, i.e., they may fail by crashing, but do not behave maliciously.
\worker failures do not affect the blockchain protocol's security as they only affect the validator's performance. 

\subsection{Enforcing Strict Determinism}
\label{sec:design:determinism}

\system assumes a widely used smart contract model~\cite{rw-assumption, doradd} in which contracts execute atop a flat key-value store and provides ACID transactional guarantees.
Each contract declares its read and write sets upfront, an assumption shown to hold for most transactional workloads (further discussed in~\S\ref{sec:discussion}).
The \router acts as the global owner of all objects but delegates execution by leasing objects to \workers.
Each \worker holds the leased state in memory and periodically reports updates back to the \router.
The \router maintains metadata tracking the current leaseholder for each object.
Based on the scheduling policy, the \router forwards a transaction to a single \worker for execution.
If all objects reside in the delegated \worker, execution runs to completion.
Otherwise, the \worker initiates lease transfers and fetches remote objects from their current owners before execution.

\noindent\textbf{Object Versioning}.
\system's scale-out design introduces challenges in guaranteeing deterministic execution in a distributed setting, beyond the scope of single-node deterministic parallel execution~\cite{doradd, blockstm, pwv}.
To enforce strict determinism in a scale-out manner, \system leverages the \router as a serialization point.
For each transaction in the consensus-ordered batch, the \router assigns a unique version to every accessed object (read or write) by incrementing its per-object counter.
This ensures that all accesses to the same object are consistent with the consensus-established order, guaranteeing strict determinism.
Version assignment requires only a single linear scan over per-transaction objects within the consensus-ordered batch, incurring negligible overhead.
Note that this version is validator-local and does not have to be the same across different validators.
This version-based design is inspired by single-node deterministic parallel systems, such as Bohm~\cite{bohm}, where a single thread assigns object versions before execution.
\system generalizes this principle to distributed execution.

Centralized version assignment simplifies the design of each \textsf{Worker}'s local runtime, which must execute transactions in parallel while preserving determinism.
When the \router dispatches a transaction to a \worker, it explicitly defines the specific object versions the transaction needs to access along with their owner node.
A transaction does not compute dependencies by identifying the last writer of an object; instead, its dependencies are fully determined by the assigned versions.
A transaction becomes runnable once all required versions are available, either locally produced or fetched from remote workers.
It observes the shared state defined by input versions, and produces the next versions for its accessed objects.
The per-object version stream assigned by the \router ensures that execution respects the consensus order without requiring additional coordination or locking.
Furthermore, this naturally enables \textit{inter-block} parallelism~\cite{schain}, as \worker execution proceeds independently of block boundaries.

Object versions have explicit lifetimes, making garbage collection trivial.
The \router assigns a unique version per transaction access, ensuring each version is consumed \textit{exactly once}.
Once the consuming transaction executes, the \worker safely discards the old version.
A new version is created even for reads, which simplifies memory management but limits traditional read-sharing optimizations.
The performance cost of this trade-off is minimal: hot objects in blockchain workloads, such as balances, order books, and NFTs, are update-heavy, and prior measurements show that write-write conflicts are far more prevalent than read-write conflicts~\cite{ethereum-conflicts-graphed}.


\noindent\textbf{Object Ownership and Leasing}.
\system adopts a strict object ownership model: at any point in time, each object resides on a single \worker, which holds an exclusive lease to access and modify it, or the \router.
Static partitioning or fixed sharding performs poorly for dynamic workloads with distributed transactions~\cite{schism}, as discussed in \S\ref{sec:background:workload-adaptiveness}.
Thus, instead of periodic repartitioning as in dynamic sharding schemes~\cite{clay, morphosys, chiller}, \system transfers ownership on demand through transaction dispatch.
This design avoids unnecessary object movement and adapts naturally to access patterns, a strategy adopted by several recent systems~\cite{hermes, zeus, leap}.
The \router maintains per-object metadata tracking the current version and leaseholder.
When scheduling a transaction to a \worker that does not own all required objects, the \router annotates the request with the current owners for each object version.
The receiving \worker fetches the necessary state directly from the owning nodes without the \router's intervention and becomes the new owner.
Upon transaction dispatch, the \router updates its metadata accordingly indicating the new \worker as the leaseholder.

\subsection{Stateless-Stateful Separation}
\label{sec:design:separation}
\system leverages a structural property of modern smart contracts: their execution can be naturally decomposed into a \emph{stateless} and a \emph{stateful} part (\textbf{Insight~1}).
The stateless part performs compute-heavy cryptographic verification and produces a validation token, which serves as an input to the stateful execution.
Before executing the stateful part, the corresponding \worker fetches this token either locally or remotely from the \worker that ran the stateless part.
If the stateless validation fails, the transaction is aborted.
To maintain version order in such cases, the \worker executing the stateful part performs a no-op that touches the relevant objects, thereby advancing their versions without applying any state changes.

This separation enables different scheduling policies for stateless and stateful parts.
The router module dispatches stateless parts with a focus on \textit{load balancing}, since they have no data dependencies.
This approach can reactively fill utilization gaps when some \workers are temporarily occupied with stateful execution due to locality-aware scheduling.
We describe how \system schedules stateful parts in \S\ref{sec:design:policy}.

\subsection{Leveraging the Consensus Window}
\label{sec:design:consensus-window}
Consensus finalization in modern blockchains often takes several hundred milliseconds (e.g., 300 ms).
Rather than leaving this interval idle, \system exploits it to hide latency and shift expensive work off the post-consensus critical path (\textbf{Insight~2}).
Specifically, during this window, \system performs stateless execution and stateful scheduling, as illustrated in Figure~\ref{fig:arch} (steps~\circled{2a} and \circled{2b}).

\noindent\textbf{Pre-consensus Stateless Execution}.
The lack of dependencies in the stateless part allows it to be executed in parallel to consensus.
This overlap leverages the consensus window to reduce end-to-end transaction execution latency.
While this can lead to wasted computation if a block is eventually rejected, such occurrences are rare, thus will have minimal impact on performance or utilization.

\noindent\textbf{Pre-consensus Stateful Scheduling}.
In addition to stateless execution, \system also runs its stateful scheduling logic during the consensus window.
The \router forwards each proposed block simultaneously to the consensus and scheduling modules.
While consensus runs, the scheduler analyzes transaction dependencies, and devises a scheduling plan.
Overlapping this with consensus allows \system to implement even more computationally expensive scheduling algorithms without affecting end-to-end latency and overall throughput.
To guarantee correctness, the router dispatches the transactions via pre-computed plan only after consensus finalizes.
On the rare occasion that consensus rejects the block, the pre-consensus stateful routing plan is invalidated and dropped.
To handle this case, \system maintains two versions of metadata: one for the scheduler and one for the router. The router updates its version when dispatching transactions.
If a block is dropped, the scheduler pauses, synchronizes the two metadata versions, discards intermediate schedules, and then resumes.

We intentionally avoid pre-consensus stateful execution.
Unlike the stateless part, stateful execution is less compute-intensive and tightly coupled with shared state dependencies.
Speculating on it would greatly complicate system design, triggering speculative object transfers and cascading aborts in the distributed setting while yielding marginal latency benefits.

\subsection{Subgraph-First Scheduling (\policy)}
\label{sec:design:policy}
\RestyleAlgo{ruled}
\begin{algorithm}[t]
    \small
    \caption{\policy (Subgraph-First Scheduling)}
    \label{alg:policy}
    \setlength{\algomargin}{1.5em}
    \SetNlSkip{0.2em}
    \DontPrintSemicolon

    \KwIn{Batch $B$ (consensus order), per-worker ownership sets $\{O_i\}$, loads $\{P_i\}$}
    \KwOut{Assignment of subgraphs to workers}

    Build dependency graph $D(B)$ and extract subgraphs $\mathcal{G}$\;

    \ForEach{subgraph $G \in \mathcal{G}$}{
      $K \leftarrow \bigcup_{t \in G} \textsc{RW}(t)$\;  \label{line:rwset}

      \ForEach{worker $i$}{
        $R_i \leftarrow |K \cap O_i| / |K|$\;  \label{line:locality}
        $L_i \leftarrow 1 - P_i / \max_j P_j$\;  \label{line:load}
        $S_i \leftarrow 0.5 * R_i + 0.5 * L_i$\;  \label{line:score}
      }

      $i^\star \leftarrow \arg\max_i S_i$ (break ties randomly)\;
      Assign $G$ to $i^\star$; update $O_{i^\star}$
    }
\end{algorithm}

Now we describe how \system efficiently schedules stateful transactions among workers.
As discussed in \S\ref{sec:background}, an effective scheduling policy must simultaneously satisfy three requirements: strict determinism, locality awareness, and load balance.
Existing approaches fall short of meeting all three at once.
For uniform workloads, a naive load balancing policy (e.g., random) would suffice.
The challenge lies in handling skewed workloads with hot objects.
Leveraging the consensus window, we design a new policy, subgraph-first scheduling (\policy), that achieves all three objectives.

Our key insight is to leverage \textbf{subgraphs} as the unit of scheduling.
For each batch of transactions from the consensus output, \system constructs a dependency graph (vertices are transactions and edges capture immediate dependencies induced by shared objects), identifies disjoint subgraphs, and dispatches those to the \workers.
The intuition is that transactions within a subgraph already share objects and ordering constraints.
Thus, collocating them on the same \worker avoids cross-node coordination.
In practice, subgraphs typically arise from contention on hot objects, making them natural candidates for collocation.

\system avoids further partitioning such subgraphs~\cite{schain}, because it would not gain performance while introducing overhead.
For extremely contended workloads, distributed stateful execution yields no benefits since the maximum concurrency is limited by the workload itself.
Therefore, the contribution of this scheduling scheme is on low to medium level of contention.
This idea aligns with prior work in distributed transactions: skewed, contended workloads benefit from single-node execution, while uniform workloads distribute effectively across nodes~\cite{turbodb}.

Algorithm~\ref{alg:policy} summarizes \policy.
For each consensus batch, the \router builds the dependency graph and schedules each disconnected subgraph $G$ as a unit.
It first computes the subgraph’s object footprint $K$ as the union of the read/write sets $\textsc{RW}(t)$ of all transactions $t \in G$.
Then, for each \worker $i$, \policy computes a \emph{locality} score $R_i$ as the fraction of $K$ already owned by $i$ (using the ownership map $O_i$), and a \emph{load} score $L_i$ from $i$’s current load $P_i$.
The final scheduling score $S_i$ is the equal-weighted combination of locality and load (line~\ref{line:score}).
This jointly optimizes both, i.e., locality score reduces state movement and cross-\worker coordination, while load score prevents persistent skew and state accumulation on a single \worker~\cite{leap, hermes}.
\policy assigns the entire subgraph to the \worker with the highest score and updates the ownership metadata accordingly.
Stateless work is dispatched purely by load, which helps utilize idle \workers even when locality concentrates some stateful subgraphs.

\subsection{Handling Failures}
\label{sec:design:ft}
\system scales out single-validator execution and adopts the same failure model as existing blockchains that allows for validator crashes.
In \system, we treat the failure of \router, the sole externally visible interface to consensus, as equivalent to a single-validator failure.
In that case, the remaining validators continue to make progress as long as the number of failed validators remains within the fault threshold of the underlying consensus protocol.
To avoid reducing the validator's MTTF due to the distributed architecture, we carefully design \system to tolerate \worker failures.

\noindent\textbf{Periodic snapshotting.}
\system implements \textit{periodic snapshotting} to keep the \router loosely but consistently synchronized with the execution state in the \workers.
Snapshotting operates in \textit{epochs}.
The \router has full flexibility in determining the epoch size, choosing boundaries that either align with consensus rounds or adapt to workload characteristics, and explicitly notifies the \workers when an epoch ends.
This exposes a tunable trade-off: longer epochs reduce snapshot overhead but increase recovery cost, while shorter epochs invert this trade-off.
\workers keep track of the objects they modify within an epoch and once the epoch finishes, they send the \router an \emph{incremental} update containing only the latest versions of those modified objects.
This incremental checkpointing on \workers amortizes frequent updates to hot objects without communicating every versioned update.
With its global view of the system, the \router knows exactly which object versions it needs at the end of each epoch.
After collecting updates from all \workers, it atomically updates its state and advances \texttt{persist\_index}, which marks the log position up to which all preceding transaction effects have been durably persisted.

\noindent\textbf{Failure recovery.}
When a \worker fails, the \router initiates recovery as follows.
It first spawns a new \worker and determines the transaction \textit{replay set}, defined as the range between the current dispatch index and the \texttt{persist\_index}.
From the log view, state up to \texttt{persist\_index} is durable at the \router.
Thus, replaying after that point is required to reconstruct the parts of the missing suffix.
The \router reclaims ownership of all objects previously assigned to the failed \worker that are already durably persisted, and then dispatches the entire replay set to the new \worker.
During replay, the \router attaches required state that it has durably persisted, avoiding extra round-trip fetches.
This procedure brings the new \worker online lazily, driven by transaction demand, to reduce stalls of excessive state transfers.
After dispatching the replay set, the \router can continue processing incoming transactions.
We conservatively replay the entire replay set rather than attempting selective replay, as transactions may have been partially executed by either healthy or failed \workers and failures can occur at arbitrary points during execution.
Notably, our design allows for skipping re-running all stateless parts as long as the \router receives acknowledgement upon successful validation, which we leave for future work.


\subsection{Elasticity}
\label{sec:design:elasticity}
\system supports seamless elasticity by automatically scaling the \worker pool in response to workload fluctuations.
The \router orchestrates resource allocation by periodically monitoring per-\worker load and deciding when to spawn or retire \workers.
Crucially, autoscaling should not stall foreground execution or introduce any explicit rebalancing phase: \system leverages its lease-based ownership model to migrate objects across \workers and \router dynamically and lazily.

\noindent\textbf{Scale-out.}
The \router spawns a new \worker and begins dispatching transactions to it.
Initially, the new \worker has no local state.
As it executes transactions, it fetches any objects it does not own from other \workers or from the \router.
Over time, state naturally migrates to the new \worker as leases are transferred during normal execution.
Note that a new \worker can immediately start helping with stateless execution.

\noindent\textbf{Scale-in.}
The \router retires a \worker \emph{gracefully} in two steps.
First, it stops assigning new transactions to the retiring \worker, but keeps it online for a grace period to serve in-flight requests and inter-\worker lease transfers.
Second, to complete retirement and reclaim ownership atomically, the \router terminates the current snapshotting epoch for that \worker, forcing it to report all objects modified in that epoch back to the \router.
This yields an epoch-aligned handoff:
hot objects migrate away naturally as their leases are transferred to other \workers, while cold objects are reclaimed by the \router via the epoch snapshot where future transactions fetch them directly from the \router.

\section{Implementation}
\label{sec:impl}

We implemented \system in 13k LoC of Rust.
This includes a modular implementation of the \router and \workers.
We use tokio~\cite{tokio} for asynchronous network IO across nodes which communicate over TCP sockets.

\noindent\textbf{\router}.
Each module runs as a long-lived task pinned to a dedicated core.
Modules on the critical transaction path (Figure~\ref{fig:arch}) are connected by bounded channels, forming a pipeline.
The router module spawns lightweight dispatching tasks to forward transactions according to the scheduling policy for stateful and stateless parts.
These tasks run to completion in parallel on a thread pool, occupying all remaining cores, with one thread pinned per core.
The autoscaler module periodically monitors the incoming load rate.
\router detects \worker failures by observing lost network connections and triggers the failure handling scheme (\S\ref{sec:design:ft}).

The \router maintains metadata for each object and each node in the cluster.
The object metadata takes the form of a map from object names to a tuple of \texttt{(current\_version, current\_owner)}.
The node metadata maintains per-\worker load information, which the stateless and stateful routing leverage to make load-aware decisions.
The metadata footprint is modest:
for 10M objects, the size of ownership metadata totals around 100\,MB.
This is in line with production chains; for example, although hundreds of millions of addresses have appeared on Ethereum, daily active addresses peak around 1.4M~\cite{etherscan_active_addresses}.

\noindent\textbf{\worker}.
Each \worker node in \system runs a deterministic parallel runtime that handles both stateless and stateful compute.
The runtime implements a custom asynchronous thread pool to drive the task execution.
The design is agnostic to the underlying smart contract VM and focuses on parallel execution with strict ordering guarantees.
Each execution unit, stateless or stateful, is an asynchronous task.
Stateless tasks are immediately runnable and can run in parallel.
Stateful tasks depend on (i) their corresponding stateless tasks (e.g., authentication) and (ii) any previous tasks accessing the same objects.
The runtime uses a dynamic DAG to keep track of the dependencies among tasks based on the specific object versions they access as assigned by the \router.
A certain task can only be executed when all its prior dependencies are satisfied.

The intra-\worker runtime implements the task DAG via \texttt{tokio::\allowbreak{}sync::\allowbreak{}Notify}~\cite{tokio-notify}, a lightweight primitive for synchronization.
\texttt{Notify} carries no payload and is used purely for readiness signaling.
Concretely, the runtime maintains one \texttt{Notify} instance per object version.
When a transaction arrives with its required versions explicitly annotated, the runtime spawns an asynchronous task.
This task (1) \texttt{await}s on the \texttt{Notify} instances for all required input versions, ensuring all dependencies are satisfied;
(2) once the dependencies become available executes the transaction to completion on the thread pool;
(3) signals the \texttt{Notify} instances corresponding to the versions it produces, thereby unblocking dependent tasks.
Remote dependencies work similarly, as when the \worker fetches the object, it signals the equivalent \texttt{Notify} instance.
Finally, since each version is consumed exactly once, the runtime garbage-collects the corresponding \texttt{Notify} entry and the object version immediately after it has been consumed.
Each \texttt{Notify} object has a unique name within the \worker and the runtime creates it atomically the first time it processes the producer or the consumer task.
This design naturally integrates dependency enforcement into the asynchronous runtime, enabling deterministic parallel execution without any invasive changes to the existing scheduler, i.e., Tokio's in the current implementation.

\section{Evaluation}\label{sec:eval}
Our evaluation aims to answer the following questions:
\begin{itemize}[noitemsep,topsep=3pt,leftmargin=*]
	\item Does \system's \textit{asymmetric} architecture improve efficiency? (\S\ref{sec:eval-arch})
	\item How does \policy compare against other distributed (non-)deterministic scheduling schemes? (\S\ref{sec:eval-policy})
	\item How much does the stateless-stateful separation improve performance? (\S\ref{sec:eval-separation})
	\item What is the benefit of leveraging the consensus window? (\S\ref{sec:eval-pre-consensus})
	\item How efficiently does \system recover from \worker failures? (\S\ref{sec:eval-failure})
	\item How does \system autoscale under varying loads? (\S\ref{sec:eval-elastic})
	\item Can \system handle realistic dynamic workload hotspots? (\S\ref{sec:eval-dynamic-hotspot})
	\item How scalable is \system's \router? (\S\ref{sec:eval-scalability})
\end{itemize}

\subsection{Experimental Setup}
\noindent\textbf{Testbed.}
We run our experiments on a 16-node cluster on AWS, using \texttt{m5d.8xlarge} instances in the same region.
One node serves as the client node which generates transactions based on an open-loop Poisson process, one node serves as the \router, and the others as \workers.
Each machine is equipped with a 10Gbps NIC, 32 vCPUs (16 physical cores), and 128GB memory, which aligns with the recommended hardware configuration for typical validators~\cite{sui-min-specs, solana-min-specs, aptos-min-specs}.
We follow the layered approach of building blockchains~\cite{PoA}.
\system is an execution layer system, hence our experiments focus on the execution layer of a single validator, which is the component \system changes.
For the consensus layer, we implement a mock consensus module that adds a constant delay of $300ms$ equivalent to the consensus execution~\cite{mysticeti}.
\system is agnostic to consensus deployment: variations such as co-located vs. geo-distributed validators, heterogeneous capacities, or Byzantine behavior would manifest only as variability in the consensus window duration.
Shorter windows would reduce the slack for pre-consensus optimizations and longer windows would strictly increase these benefits.

\noindent\textbf{Workloads.}
Following the methodology of prior work~\cite{spectrum, hermes, delaypart}, we evaluate \system using YCSB~\cite{ycsb}, TPC-C~\cite{tpcc}, and Ethereum mainnet traces~\cite{pevm}.
For YCSB, we implement a smart contract with a global mapping that represents a key-value store over a 10M-key keyspace, where each object is modeled as a 0.25\,kB account.
To study contention and skew, we vary the number of keys accessed per transaction under both uniform and Zipfian distributions.
Each transaction contains a fixed number of read-modify-write operations, each on a unique key.
Since \system is VM-agnostic, we factor out VM-specific effects by modeling execution as a stateless stage followed by a stateful stage.
Because real smart-contract VMs incur high overhead while pure YCSB reads/writes are lightweight, we add synthetic spinning to emulate realistic execution.
Unless otherwise stated, YCSB uses 0.5\,ms synthetic time for both stages, consistent with \S\ref{sec:background} and prior measurements on common smart contracts~\cite{crystality}.
We also validate real execution by running the same experiments on SuiVM and real-world cryptographic functions (\S\ref{sec:eval-separation}).
For TPC-C, we implement a smart contract supporting New-Order and Payment transactions, which account for 88\% of the workload and involve remote data access~\cite{leap}, with 40 warehouses per \worker.
Since TPC-C is already heavier than YCSB, its stateful stage runs the real business logic, while the stateless stage uses the same 0.5\,ms synthetic cost.

\subsection{Architectural Efficiency}\label{sec:eval-arch}

\begin{figure}
	\centering
	\includegraphics[width=1\columnwidth]{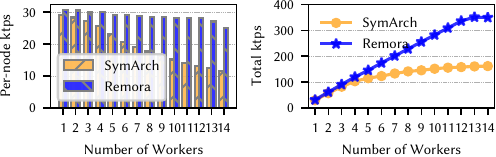}
	\caption{Comparison with other work using symmetric architecture.}
	\label{fig:eval-arch-diff}
\end{figure}

\noindent\textbf{Methodology.}
We start by comparing \system's \emph{asymmetric} architecture with a single node in charge of scheduling to prior Calvin-inspired~\cite{calvin} \emph{symmetric} schemes~\cite{hermes, delaypart, qstore, slog, detock} where every node accepts and deterministically schedules all transactions.
To evaluate the efficiency of these two choices, we implement a baseline, \texttt{SymArch}, where each \worker performs both deterministic scheduling and execution to represent this line of work.
Since sequencing in our setting is established via consensus, we exclude sequencing from both systems to ensure a fair comparison.
We use the YCSB benchmark with uniform distribution and a round-robin scheduling policy instead of \policy, while disabling the stateless-stateful separation.
In this configuration, the ideal per-\worker throughput is about 32k TPS, given 32 vCPUs and a 1 ms per-transaction service time.
We vary the number of \workers and measure the maximum throughput.

\noindent\textbf{Results.}
Figure~\ref{fig:eval-arch-diff} shows that \system delivers higher throughput per deployed node than \texttt{SymArch} as the number of \workers increases.
In \texttt{SymArch}, every \worker repeatedly pays the cost of deterministic scheduling, consuming CPU cycles that could otherwise execute transactions.
This redundant overhead directly reduces the useful work obtained from each provisioned machine.
In contrast, \system offloads all scheduling to the \router, allowing \workers to dedicate their full capacity to transaction execution, improving resource efficiency and translating the same hardware budget into more throughput.
Although \system requires an additional coordination node, the net deployment cost is still lower for a target throughput: for example, a 4-worker \system configuration already outperforms a 5-worker \texttt{SymArch}.
We expect the efficiency gap to widen further with more advanced scheduling policies that are more computationally expensive compared to round robin.
\system's near-linear scaling stops once the \router becomes the bottleneck around 340 kTPS, which we explain in \S\ref{sec:eval-scalability}.
Notably, \system can also support more \workers when smart contract is compute-heavier, or when \worker instance is smaller.

\subsection{Scheduling Schemes Performance}\label{sec:eval-policy}

\begin{figure*}
	\centering
	\includegraphics[width=2\columnwidth]{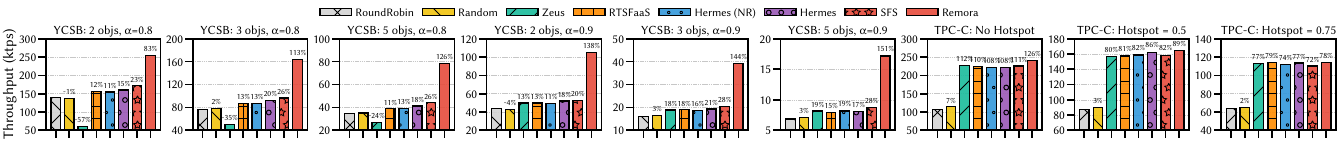}

	\caption{Performance comparison with other policies. Percentage denotes the improvement over round-robin policy.}
	\vspace{-0.4cm}
	\label{fig:eval-policy}
\end{figure*}

\begin{figure*}
	\centering
	\includegraphics[width=2\columnwidth]{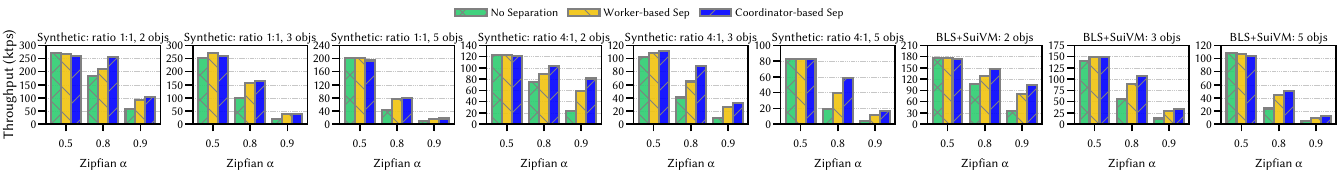}
	\caption{{Benefits of stateless-stateful separation.
	Ratio denotes the duration ratio of stateless and stateful part in each transaction.}
	}
	\label{fig:eval-separation}
\end{figure*}

\noindent\textbf{Methodology}.
{\sloppy
We compare \policy against several state-of-the-art scheduling schemes, including both deterministic and non\-deterministic designs by measuring the max achieved throughput.
\textbf{Hermes}~\cite{hermes} represents the state-of-the-art for deterministic scheduling.
It considers both load and locality but violates strict determinism, since it relies on transaction reordering to improve performance.
We also implement a strictly-deterministic variant, \textbf{Hermes (NR)}, that disables reordering in Hermes.
\textbf{RTSFaaS}~\cite{rtsfaas} is another scheme that considers both locality and load balance initially designed for FaaS workloads, yet is not deterministic.
\textbf{Zeus}~\cite{zeus} is a locality-only baseline that aggressively migrates objects to the \worker executing each distributed transaction.
We also include simple load-balancing policies: RoundRobin and Random.
Notably, under uniform or low-skew workloads, all policies achieve similar performance as transactions are evenly distributed and require minimal object migration.
This experiment thus focuses on contended workloads.
For an apples-to-apples comparison of scheduling policies, we disable \system's use of the consensus window and the stateless-stateful separation.
\par}

\noindent\textbf{YCSB}.
As shown in Figure~\ref{fig:eval-policy}, naive policies perform the worst due to frequent distributed transactions and object transfers.
Zeus ignores load imbalance and gradually migrates all objects to a single \worker.
After warm-up, only one \worker is busy while others remain idle, which is consistent with prior observations~\cite{hermes}.
However, under high contention (3 or 5 objects with $\alpha=0.9$) in which the concurrency level in the workload is extremely limited, the benefits of scale-out diminish and single-node execution outperforms others.
RTSFaaS and Hermes (NR) achieve similar performance ($11$\% to $19$\% improvement over round-robin) by considering both load and locality.
The full Hermes policy benefits further from reordering up to $21$\%, yielding the best baseline performance.

\policy consistently outperforms all baselines (up to $28$\%), including non-deterministic ones, across all contended configurations.
Whereas Hermes and RTSFaaS balance load by migrating individual transactions, \policy schedules at the subgraph level, colocating dependent and locality-sharing transactions.
This granularity is dependency-aware and minimizes remote state transfer, confirming that subgraphs are the right unit for scale-out scheduling.
We also include \system, which is the \policy policy with the optimizations of stateless-stateful separation and consensus window execution.

\noindent\textbf{TPCC}.
For TPC-C benchmarks, we pre-partition warehouses among \workers, and construct the hotspot by setting 50\% and 75\% of transactions targeting at the warehouses pre-assigned on the same \worker.
\policy does not yield noticeable throughput gains over the baselines on TPC-C.
This is expected because TPC-C uses warehouse-based partitioning by design, and most transaction state accesses are by the home warehouse/district while cross-warehouse accesses are relatively limited (default 15\% in NewOrder and Payment transactions).
Consequently, under TPC-C the bottleneck is dominated by warehouse-level load skew rather than distributed execution, so different locality-aware policies converge in performance.
Our observation is consistent with prior work~\cite{hermes, delaypart}.

\noindent\textbf{Lease transfer analysis.}
We measure the network consumption of \workers involved in object migration and report their overhead.
Under uniform YCSB cases where total throughput reaches 100\,k TPS, lease transfer with two objects per transaction consumes 24\,MiB/s of network bandwidth, of which 73\% is inbound due to transactions received from \router.
In contended cases where $\alpha=0.9$, the bandwidth consumed by the most heavily loaded \worker{} increases to 33\,MiB/s.
Bandwidth consumption also rises with the number of objects per transaction.
For example, with five objects per transaction, the consumed bandwidth increases to 62\,MiB/s.
Overall, this level of network consumption does not pose a deployment concern for modern cloud infrastructures.

\subsection{Impact of Stateless-Stateful Separation}\label{sec:eval-separation}

\noindent\textbf{Methodology.}
We study the benefits of separating the stateless from the stateful part of a transaction in the achieved throughput.
We consider two separation modes:
(i) \textsl{Worker-based Sep}, where only \workers apply stateless–stateful separation during execution.
This configuration could even be applied in single-node runtimes or distributed runtimes without centralized scheduling.
(ii) \textsl{Coordinator-based Sep}, in which the \router can schedule stateless and stateful parts on different \workers.
We use the \policy scheduling policy and the YCSB benchmark to study such impact.

\noindent\textbf{Results.}
Figure~\ref{fig:eval-separation} shows that \textsl{Worker-based Sep} improves throughput by up to $2\times$, as separating execution at \workers unlocks greater parallelism across available cores.
\textsl{Coordinator-based Sep} achieves up to $3\times$ improvement, with the most pronounced advantage when the stateless dominates.
These results highlight that under skewed workloads, separation at the \router effectively harnesses otherwise idle compute capacity across \workers.
In contrast, under low contention the benefits diminish, as load is balanced.

\subsection{Benefits of Consensus Window}\label{sec:eval-pre-consensus}

\begin{figure}
	\centering
	\includegraphics[width=1\columnwidth]{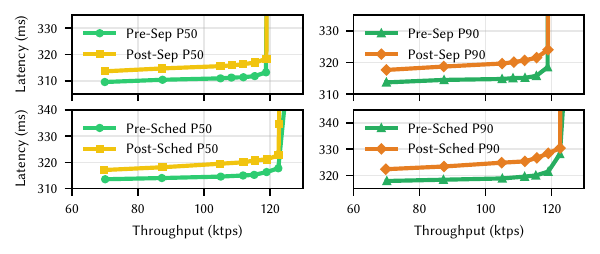}
	\caption{Benefits of consensus window. \textsl{Pre-Sep} denotes pre-consensus stateless execution, while \textsl{Post-Sep} denotes post-consensus stateless execution; \textsl{Pre-Sched} denotes pre-consensus scheduling, while \textsl{Post-Sched} denotes post-consensus scheduling.
	}
	\label{fig:eval-consensus-window}
\end{figure}

\noindent\textbf{Methodology.}
\system leverages the consensus window to perform stateless execution and stateful scheduling.
We study the effect of this design choice on latency.
Given that the benefits are independent of scheduling schemes, we use the uniform YCSB benchmark, 2ms of stateless, and 0.5ms of stateful processing, and \policy scheduling.
As a baseline, we disable pre-consensus scheduling and decoupled stateless execution and measure the end-to-end latency with achieved throughput.

\noindent\textbf{Pre-consensus stateless execution.}
In Figure~\ref{fig:eval-consensus-window}, \textsl{Pre-Sep} demonstrates clear latency improvement by overlapping stateless execution with the consensus window.
\textsl{Post-Sep} incurs an additional 2\textasciitilde 5ms at both P50 and P90 latency.
Given that the stateless portion must always be executed and is fully parallelizable, the performance gain from pre-consensus execution is guaranteed, and becomes even more pronounced as the stateless workload grows.

\noindent\textbf{Pre-consensus stateful scheduling.}
A similar effect arises when scheduling is shifted into the consensus window.
To isolate this effect, we keep stateless execution in post-consensus for this experiment.
Although our scheduling policy is lightweight (\textasciitilde 4ms per batch), latency reduction of \textsl{Pre-Sched} indicates that the consensus window provides sufficient slack to accommodate more sophisticated scheduling strategies off-path, e.g., ML-based ones.


\subsection{Failure Recovery}\label{sec:eval-failure}

\begin{figure}
  \centering
  \includegraphics[width=1\columnwidth]{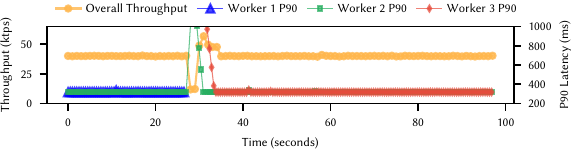}
  \caption{Failure recovery. \worker 1 failed and \worker 3 is spawned.
  }
  \label{fig:eval-failure}
\end{figure}

We evaluate \system's failure recovery by subjecting it to a constant YCSB uniform workload while injecting a \worker failure at a random time.
From this experiment onward, we use a smaller 3-\worker pool, as these experiments focus on demonstrating \system's mechanisms and the feasibility of properties rather than its performance.

Figure~\ref{fig:eval-failure} illustrates the recovery process.
Initially, \worker 1 and \worker 2 jointly handle the incoming requests.
Upon detecting the failure of \worker 1 at ${t=28s}$, \system immediately switches on \worker 3 as a replacement.
We omit the cost of provisioning an AWS EC2 VM and instead pre-launch an idle \worker before the experiment.
During recovery, the system experiences a brief throughput degradation of approximately 2 seconds while it determines the replay set, transfers object ownership, and dispatches the replay set to \worker 3.
Subsequently, \worker 2 and \worker 3 enter a catch-up phase, processing requests at their maximum throughput capacity, which temporarily elevates latency.
Once the backlog is cleared, the performance stabilizes at the pre-failure levels, demonstrating \system's resilience to \worker failures.

\subsection{Elasticity}\label{sec:eval-elastic}

\begin{figure}
	\centering
	\includegraphics[width=1\columnwidth]{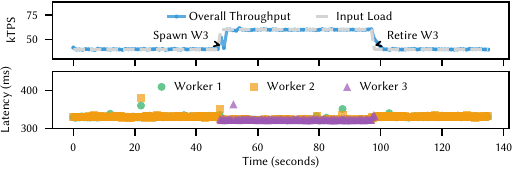}
  	\caption{Elastic scaling for load fluctuations.} 
	\label{fig:eval-elasticity}
\end{figure}
We study \system's elasticity capabilities by applying load that changes over time via uniform YCSB workloads.
Figure~\ref{fig:eval-elasticity} shows the input load pattern: throughput increases sharply from 40k TPS to 60k TPS and then decreases to 40k TPS.
Once the autoscaler detects the load spike which exceeds the processing capacity of existing \workers, it starts a new \worker instance (pre-launched as same as the failure experiment).
The \router immediately begins forwarding load to the new \worker, which incrementally fetches required state from existing nodes on demand.
When the load drops to 40k TPS, the autoscaler initiates scale-in and gracefully retires \worker~3.
Across both transitions, latency remains low and stable, showing that \system can autoscale without stalling foreground execution or triggering an explicit rebalancing phase.

\subsection{Handling Dynamic Hotspots}\label{sec:eval-dynamic-hotspot}
\begin{figure}
	\centering
	\includegraphics[width=1\columnwidth]{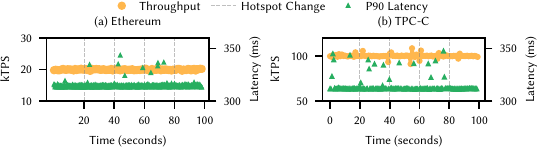}
  	\caption{Dynamic hotspot workloads.} 
	\label{fig:eval-dynamic-hotspot}
\end{figure}

We evaluate \system's ability to handle dynamic hotspots, i.e., object popularity changes, via both real-world traces from the Ethereum mainnet~\cite{pevm} and TPC-C benchmarks.
For the Ethereum traces, we derive object access distributions from different blocks spanning different periods of time to capture realistic contention.
We run this workload on a production-grade smart contract VM, SuiVM~\cite{sui-code}.
For the TPC-C experiment, we target 50\% of the transactions to warehouses on a single \worker node to emulate hotspots.
The workload changes every 20s, causing hotspots to shift dynamically across \workers and mimicking temporal locality observed in practice.
We set the input rate to 80\% of the measured system capacity for each workload (accounting for available parallelism and VM overhead) based on offline profiling.
\system shows only occasional high-latency outliers (Figure~\ref{fig:eval-dynamic-hotspot}).

\subsection{Scalability}\label{sec:eval-scalability}
\begin{figure}
  \centering
  \begin{minipage}[c]{0.32\linewidth}
    \centering
    \includegraphics[width=\linewidth]{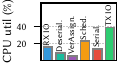}
    \label{fig:cpu-breakdown}
  \end{minipage}\hfill
  \begin{minipage}[c]{0.66\linewidth}
    \centering
    \label{tbl:scalability}
    \footnotesize
    \begin{tabular}{c|ccccc}
      \toprule[1.5pt]
      Policy &
      \begin{tabular}[c]{@{}c@{}}Round\\robin\end{tabular} & Zeus &
      \begin{tabular}[c]{@{}c@{}}RTS\\FaaS\end{tabular} & \policy & \system \\
      \midrule
      TPS & 380k & 360k & 354k & 348k & 251k \\
      \bottomrule[1.5pt]
    \end{tabular}
  \end{minipage}
  \vspace{-0.3cm}
\captionof{figure}{\router analysis. The left figure shows the CPU utilization breakdown, while the right table shows the max throughput.}
  \label{tbl:scalability}
\end{figure}

While \system's asymmetric architecture enables near-linear scale-out with the number of \workers (\S\ref{sec:eval-arch}), centralizing transaction dispatch at the \router raises the concern that it could become a throughput bottleneck.
We therefore stress-test the \router under a range of scheduling policies.
To isolate dispatching overhead, we use uniform distribution with zero service time, ensuring that \workers do not limit throughput.
As shown in Figure~\ref{tbl:scalability}, with a simple round-robin policy the system sustains nearly 380k TPS.
Adding computation for locality calculation and dependency analysis reduces throughput only slightly, to around 350k TPS.
Separating stateless and stateful execution reduces throughput to 250k TPS due to the extra message serialization, I/O, and syscall overhead.
Given that state-of-the-art consensus modules achieve 200\textasciitilde 300k TPS~\cite{mysticeti, shoal++}, we conclude that the \router in our design is sufficient for practical deployment.

\begin{figure}
  \centering
  \includegraphics[width=1\columnwidth]{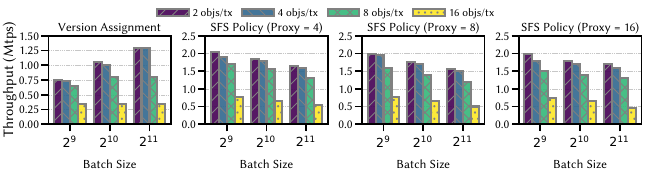}
  \caption{VerAssign. and Sched. (\policy) module sensitivity analysis.
  }
  \label{fig:eval-submodule}
\end{figure}

To dive deeper into the \router's design, Figure~\ref{tbl:scalability} also reports the per-module CPU utilization breakdown per 1k transactions.
Notably, when removing stateless-stateful separation, the TX-IO part only takes 25\%, demonstrating the extra overhead of the separation.
Leveraging syscall batching and kernel bypass could mitigate this, which we leave for future work.
We also observe that all cores are saturated, indicating that larger EC2 instances would yield proportionally higher throughput, as the network tasks are parallelizable.

In all these modules, version assignment is the main \textit{serial} component, making it the primary scalability bottleneck as per Amdahl's Law.
We also study each module in isolation to identify which parameters affect performance.
We find that the number of objects per transaction is the key limiting factor for version assignment.
In the scheduling module, though building dependency graphs can be parallelized, the locality-related part is sequential as to know the scheduling decision made for overlapped objects.
As shown in Figure~\ref{fig:eval-submodule}, we sweep the number of \workers, batch size, and objects per transaction, and find that \policy's throughput is mainly sensitive to the latter two, due to dependency-graph construction and subgraph assignment overhead.

\section{Discussion}\label{sec:discussion}

\noindent\underline{\textbf{Read/write set assumption}}.
\system relies on knowing each transaction's read/write set before stateful execution.
This requirement restricts expressiveness: it does not directly support contracts whose accessed state is discovered only during execution.
Thus, \system targets execution environments where access sets are part of the transaction interface, can be conservatively derived before execution, or are provided by clients/builders through off-chain simulation.
Such environments still cover many common smart-contract workloads, including token transfers, decentralized exchanges, and NFT mints.
This model is common in deterministic database systems~\cite{calvin, bohm, pwv, slog, detock, caracal, qstore, caerus, doradd} and research blockchain systems~\cite{peep, schain, parblockchain}, and has been shown to hold for a majority of transactional workloads in a recent workload-characterization study~\cite{rw-assumption}.
Some prior systems based on optimistic execution~\cite{aria, harmony, occ-da} relax this limitation, but at the cost of violating strict determinism.
Recent work~\cite{dag-btm, nemo} further shows that access specifications can benefit optimistic execution by reducing aborts.
This constraint simplifies our enforcement of strict determinism and enables efficient scheduling with locality awareness.
Furthermore, it is also a practical design point in several production blockchains: Solana and Sui expose transaction-level account/object access metadata for parallel execution~\cite{solana-rw, sui-rw}, while Ethereum is set to enforce block-level access lists~\cite{eip7928} as a headliner feature in its upcoming Glamsterdam upgrade~\cite{eth-bal}.
Thus, \system can be directly integrated into any existing blockchain system with a modular execution layer whose access set can be determined before execution.

\noindent\underline{\textbf{Scaling L1 on-chain compute}}.
Emerging on-chain workloads (AI inference, ZK proofs, and post-quantum crypto~\cite{algorandPostQuantum2025, dfinity2024deai, wang2025aiarena, fastcrypto}), are compute-heavy and currently pushed to L2s or off-chain due to incapability of current L1 infrastructure.
\system removes this reliance via elastic scale-out, improving L1 performance while avoiding fragmentation and enabling new L1-native use cases.

\noindent\underline{\textbf{DoS resilience}}.
Stateless validation often involves expensive cryptographic checks, making it a common vector for DoS attacks.
\system mitigates this risk by elastically scaling stateless execution across nodes, absorbing heavy or adversarial workloads without introducing coordination bottlenecks.
This elasticity ensures system responsiveness and robustness even under targeted load.
As blockchains adopt quantum-safe cryptography and other compute-intensive primitives, such resilience becomes increasingly essential.

\section{Related Work}\label{sec:relatedwork}


\noindent\textbf{Deterministic Distributed Transactions}.
Most scale-out deterministic transactional systems adopt Calvin~\cite{calvin}'s \emph{symmetric} architecture, unlike \system's asymmetric design.
Some focus on improving the sequencing layer, while others optimize the partitioning and scheduling scheme.
SLOG~\cite{slog} and Detock~\cite{detock} decentralize the sequencer to support geo-distributed deployments.
Caerus~\cite{caerus} minimizes the round-trip traversal by deterministically merging per-region partial sequences using a new ordering protocol.
Q-Store~\cite{qstore} bypasses the single-threaded sequencing and scheduling parts via constructing fine-grained execution queues.

\noindent\textbf{Deterministic Single-Node Transactions}.
Deterministic parallelism has been widely studied in databases and systems.
A broad line of work follows a pessimistic \emph{schedule-then-execute} pattern, with mechanisms spanning dependency-graph analysis~\cite{aria, doradd, kuafu, cbase, pwv}, multi-versioning~\cite{bohm, caracal}, and per-object queues~\cite{decentsched}.
\system similarly decouples scheduling from execution.
Other work adopts optimistic \emph{execute-then-validate} schemes~\cite{aria, eve}, which avoid pre-declared read/write sets but can incur aborts under contention and relax strict adherence to the agreed-upon order.

\noindent\textbf{Non-Deterministic Distributed Transactions}.
This line of work relies on variants of two-phase commits (2PC) to coordinate multi-partition transactions.
Some accelerate 2PC using new hardware technologies such as RDMA~\cite{redt, chardonnay} and CXL~\cite{tigon, pasha} to reduce communication costs.
Eris~\cite{eris} exploits programmable switches for transaction sequencing.
Centralized transaction routing with ownership-based data migration is also adopted in several shared-storage databases~\cite{chimera, gaussdb}.
GaussDB~\cite{gaussdb} employs a centralized routing model performing inference while deploying training elsewhere to avoid the on-path overhead.
Our proposed consensus window can allow such computationally expensive and ML-based routing schemes.
Pegasus~\cite{pegasus} uses an in-network directory to track objects location, similar to how \router tracks ownership.
Lotus~\cite{lotus} optimizes multi-partition transactions by introducing granule locks to mitigate distributed commit overhead.
Besides, epoch-based commit/recovery and asynchronous persistence in prior work~\cite{coco, persimmon} are similar to our design.

\noindent\textbf{Smart Contract Execution}.
Existing efforts primarily exploit multi-core parallelism for deterministic execution.
Recent work targets finer-grained operation-level optimizations to mitigate rollback overhead under optimistic concurrency control~\cite{parallelevm,spectrum,occ-da,nezha,tell}.
OptME~\cite{optme} builds key-based dependency graphs to derive parallel execution schedules while extensively using re-ordering.
Some other work leverages new programming semantics to enhance performance, including commutativity~\cite{crystality, groundhog} and deferred objects~\cite{deferred-objects}.
Spectrum~\cite{spectrum} carefully designs rollback and re-execution mechanisms to support unknown read/write sets while preserving strict determinism.
Pre-execution is also used to synthesize accelerated EVM code by caching speculative results for all possible branches~\cite{forerunner,seer}.
DeCl~\cite{decl} adopts software-based sandboxing to perform deterministic gas metering on untrusted machine code.
In Vegeta~\cite{vegeta}, each validator speculatively executes its proposed blocks to generate hints, and after consensus all validators replay all blocks.
These mechanisms are orthogonal to the design of \system.

\noindent\textbf{Blockchain Architectures}.
AdaptChain~\cite{adaptchain} scales throughput by adjusting concurrent block generation based on demand.
Some systems move state off-chain to increase throughput~\cite{slimchain,porygon}.
Pilotfish~\cite{pilotfish} adopts the Calvin architecture and uses fixed data partitioning, thus inelastic and inadaptive to workloads.
In permissioned systems, SChain~\cite{schain} adopts a similar architecture in its single trusted organization and decouples execution into compute and storage roles.
FlexChain~\cite{flexchain} leverages hardware disaggregation to improve resource utilization.
\system differs in leveraging domain insights and focusing on the performance and operational practicality of a modular execution layer, instead of the whole chain.

\section{Conclusion}\label{sec:conclusion}
In this work, we make the case for a scale-out blockchain execution layer that preserves strict determinism.
We show why prior deterministic transactional systems fall short and identify opportunities unique to blockchains.
Based on these insights, we design \system, a scale-out smart contract execution engine.
\system adopts an asymmetric architecture for efficiency and cost-effectiveness, and employs versioning with ownership to guarantee determinism.
\system separates the stateless part from smart contract transactions and leverages consensus window to perform stateless execution and stateful scheduling.
\system also demonstrates its elasticity and workload adaptiveness.
We believe \system paves the way for the next generation of scalable blockchain systems.

\begin{acks}
We sincerely thank anonymous reviewers, George Danezis, Kostas Chalkias, Ben Livshits, and the members of the LSDS group for their feedback.
This work is partially supported by Mysten Labs and a Sui Academic Research Award.
\end{acks}

\bibliographystyle{ACM-Reference-Format}
\bibliography{ref}

\clearpage
\appendix
\section*{APPENDIX}
\label{appendix}
\section{System Summary}\label{sec:system-model}

\subsection{Entities}

A validator in \system consists of two types of entities that work together to execute transactions. The \textbf{\router} serves as the validator's centralized coordinator, performing the following functions: it interfaces with the consensus layer, assigns per-key versions in input order, schedules and dispatches transactions to workers, tracks object ownership through leases, controls elasticity by managing the worker pool, persists snapshots for durability, and coordinates recovery after failures.

The \textbf{\workers} are a dynamic set of execution nodes responsible for transaction processing. These nodes execute both stateless and stateful parts of transactions, maintain leased objects in memory, fetch objects and accept lease transfers on demand from other workers, and perform local readiness gating to determine when transactions can execute. Unlike the \router, workers do not engage in global scheduling decisions, focusing instead on local execution.

Additionally, a \consensus module runs on the \router, producing committed blocks that establish a total order of transactions. A \emph{block} is an ordered batch of transactions that are processed atomically by the consensus layer. The \router can also observe blocks proposed to consensus before their commitment, enabling speculative execution.

\subsection{State and Objects}
The global state in \system is represented as a versioned map $(\mathcal{K}\times\mathbb{N})\to\mathcal{V}$, where each pair $(k,i)$ identifies the value of object $k$ at version index $i$. 

For each key $k$, the \router maintains a monotonically increasing counter $\mathrm{ver}[k]\in\mathbb{N}$ that tracks the next version index to be assigned. An \emph{object version} is defined as a pair $(k,i)$ where $0\leq i\leq\mathrm{ver}[k]$. The system assumes an initial state where every key $k$ has an initial version $(k,0)$ representing its initial value.

\subsection{Transactions and Separation}\label{sec:transactions-separation}
Each transaction $T$ declares static read/write sets $R(T), W(T)\subseteq\mathcal{K}$ that are known to the \router before scheduling begins. This static declaration enables the \router to perform dependency analysis and version assignment without executing the transaction.

Transaction execution follows a two-phase separation model. The \textbf{stateless step} consists of a validation function $\mathrm{Validate}(T)$ that returns either a token $\tau_T$ or an invalid result. This token is bound to the transaction's inputs, ensuring consistency between validation and execution. The \textbf{stateful step} takes the validation token $\tau_T$ along with the object versions assigned to $T$ by the \router, then performs state reads and writes before terminating.

When $\mathrm{Validate}(T)$ returns invalid, the stateful step still executes but performs a no-op operation. This no-op consumes the assigned versions to preserve the per-key version streams, ensuring that version numbers remain consistent across all transactions.

\subsection{Time and Consensus Interface}
The system distinguishes between speculative and committed work through its consensus interface. Pre-consensus work on proposals is speculative and maintains the invariant of being side-effect free on state. The system uses consensus rounds to define logical epochs $E\in\mathbb{N}$, which serve as boundaries for lease expiration and provide a synchronization point for system-wide operations.

\subsection{Communication}
Communication between \workers is controlled to maintain system integrity. \workers may engage in peer-to-peer communication solely for fetching object state when authorized by the \router through a lease transfer. This restriction ensures that all state movement is tracked and authorized centrally while allowing direct transfers between execution nodes.

\subsection{Version Assignment (\router)}\label{sec:version-assignment}
The \router assigns versions to transactions deterministically based on their position in the consensus order. For a sequence of transactions $\langle T_1,\dots,T_n\rangle$, the \router performs version assignment by walking through the order exactly once. For each transaction $T_i$ and for each key $k\in R(T_i)\cup W(T_i)$, the \router increments the version counter $\mathrm{ver}[k]\leftarrow\mathrm{ver}[k]+1$ and assigns the input version $(k,\mathrm{ver}[k])$ to transaction $T_i$. This mechanism ensures the property of \emph{single-consumption}: each object version $(k,i)$ is consumed by exactly one transaction, after which it becomes eligible for garbage collection.

To enable speculative execution during the consensus window, the \router applies the same version assignment rule to proposed transaction orders using separate speculative counters $\mathrm{ver}^{\text{spec}}[\cdot]$. If the committed order matches the proposal, the speculative assignment is adopted, allowing transactions to proceed without recomputation. Otherwise, the speculative assignment is discarded and the \router recomputes versions based on the committed order.

\subsection{Ownership and Leasing}
The system maintains an ownership model where each key $k$ has at most one owner \worker at any given time. The \router serves as the authoritative source for ownership metadata, maintaining $\mathrm{owner}[k]$ for every key in the system to ensure consistency and prevent conflicting ownership claims.

Lease transfers enable state movement between workers when transaction placement requires it. When a transaction $T$ is placed on \worker $A$ but requires a key $k$ currently owned by \worker $B$, the \router orchestrates the transfer by including a source hint $B$ in the dispatch message to $A$ and atomically updating the ownership metadata $\mathrm{owner}[k]\leftarrow A$. \worker $A$ then fetches the required version $(k,i')$ where $i'\le i$ from \worker $B$ through peer-to-peer communication. If \worker $B$ has already checkpointed its state or is unavailable, \worker $A$ fetches the data from the \router.

At epoch boundaries, the system performs ownership reconciliation. Every worker takes a snapshot of all objects it owns that were modified during epoch $E$ and reports this information to the \router. This reporting ensures that the \router maintains a view of the global state distribution for garbage collection and recovery operations.

\subsection{Persistence and Recovery}
At each epoch $E$, the \router constructs a global snapshot by combining lease-aligned reports from all \workers. These reports contain the latest versions of objects owned by each worker that were modified during the epoch. After the \router persists the snapshot and acknowledges receipt, workers may garbage collect versions with indices less than or equal to $E$ according to the configured policy.

When a \worker crashes, the \router selects the latest snapshot epoch $E$, spawns replacement workers, and seeds them with state from the snapshot. The replacement workers then replay the committed transaction log from round $E\!+\!1$ to the current round, reconstructing the lost state through deterministic execution.

Garbage collection occurs at two points. During normal execution, when a worker creates a new version of an object, it may garbage collect all older versions of that object. At epoch boundaries, after sending its report to the \router and receiving acknowledgment, a worker may garbage collect all but the latest version of each object it owns.

\subsection{Execution Rule at \workers}\label{sec:execution-rule}
A \worker executes a stateful transaction $T$ only when all \router-assigned input versions $(k,i)$ for $k\in R(T)\cup W(T)$ are available locally, either produced by prior local execution or fetched through lease transfer. This availability requirement ensures that transactions execute with the input versions determined by the global ordering.

When reading version $(k,i)$, the worker returns the value produced by the unique transaction to access $k$ preceding $T$ in the committed order. If no such transaction exists, the read returns the value from the snapshot or the initial value. Executing $T$ consumes all assigned input versions $(k,i)$ and produces new versions $(k,i\!+\!1)$ for each $k\in W(T)$ with the updated values.

Before consensus commitment, \workers may execute only stateless validation. No state reads or writes, lease transfers, or version consumption occur during the pre-consensus phase, maintaining the invariant that speculative work has no side effects on persistent state.

\subsection{Notation Summary}
\begin{itemize}[leftmargin=12pt, itemsep=2pt]
  \item $\mathcal{K}$: The set of all keys in the system
  \item $\mathcal{V}$: The domain of values that objects can hold
  \item $R(T)$: The declared read set of transaction $T$
  \item $W(T)$: The declared write set of transaction $T$
  \item $\mathrm{ver}[k]$: The next version index for key $k$ maintained at the \router
  \item $(k,i)$: An object version, pairing key $k$ with version index $i$
  \item $\mathrm{owner}[k]$: The current owner \worker of key $k$
  \item $E$: An epoch corresponding to a consensus round
  \item $\tau_T$: The validation token for transaction $T$
\end{itemize}

\section{Assumptions}\label{sec:assumptions}

\system makes the following assumptions to ensure correctness:

\paragraph{Core System Assumptions}
\begin{itemize}[leftmargin=12pt, itemsep=2pt]
  \item \textbf{A1 (Consensus)}: The consensus layer provides a total order over transactions in committed blocks with safety (no forks) and eventual liveness.
  \item \textbf{A2 (Determinism)}\label{asm:determinism}: Stateless and stateful execution are deterministic given their inputs, with no hidden sources of nondeterminism (e.g., system time).
  \item \textbf{A3 (Cryptography)}: Tokens produced by $\mathrm{Validate}(T)$ are unforgeable, bound to $T$'s inputs, and verifiable by the stateful step.
\end{itemize}

\paragraph{Communication and Timing}
\begin{itemize}[leftmargin=12pt, itemsep=2pt]
  \item \textbf{A4 (Eventual Delivery)}\label{asm:eventual-delivery}: Messages are eventually delivered despite possible delay, reordering, or duplication; retry ensures eventual delivery.
  \item \textbf{A5 (Local Liveness)}\label{asm:local-liveness}: If a worker starts executing the stateless or stateful part of a transaction and does not crash, it eventually completes execution.
  \item \textbf{A6 (Partial Synchrony)}: The system is partially synchronous with an eventually perfect failure detector: crashed workers are eventually suspected (completeness) and correct workers are eventually not suspected (accuracy).
\end{itemize}

\paragraph{Failure Model}
\begin{itemize}[leftmargin=12pt, itemsep=2pt]
  \item \textbf{A7 (Worker Failures)}: Workers are crash-stop and may be replaced. No Byzantine behavior occurs within a validator.
  \item \textbf{A8 (Stability)}\label{asm:stability}: Eventually the inter-crash interval is sufficiently long to allow recovery to complete.
  \item \textbf{A9 (Router Durability)}: The \router remains available and its storage persists across worker failures. The \router is a single point of failure.
\end{itemize}

\paragraph{Performance Optimizations.}
The following assumption enables speculation but is not required for correctness:
\begin{itemize}[leftmargin=12pt, itemsep=2pt]
  \item \textbf{Proposal Visibility}: The \router may observe proposed blocks and their intra-block order before commitment.
\end{itemize}

\section{Pseudocode}

Algorithms~\ref{alg:router}--\ref{alg:elasticity} show the main functions of the \router and \workers. To simplify the presentation, we assume the existence of some black-box helper functions. These are typeset in \texttt{monospace}. To distinguish the functions we define from the black-box helpers, we typeset the former in \textsc{Small Caps}.

\begin{algorithm}[t]
\caption{\router main functions}
\label{alg:router}

\Proc{\OnProposedBlock{block}}{
  \RunStateless{block}\; \label{line:router-runstateless}
  specPlan $\gets$ \PlanBlock{block}\; \label{line:router-planblock-spec}
}
\Proc{\OnCommittedBlock{block}}{
  \If{\MatchesProposal{block}}{plan $\gets$ specPlan\; \label{line:router-adopt-spec}}
  \Else{plan $\gets$ \PlanBlock{block}\; \label{line:router-recompute}}
  \Dispatch{plan}\; \label{line:router-dispatch}
}
\Proc{\PlanBlock{block}}{
  vers $\gets$ \AssignVersions{block}\;  deps $\gets$ \BuildDependencies{block}\; \label{line:router-assign-build}
  units $\gets$ \Partition{deps}\;       place $\gets$ \PlaceUnits{units}\; \label{line:router-partition-place}
  \Return $\{\text{vers},\text{place}\}$\;
}
\Proc{\RunStateless{block}}{
  \ForEach{txn $\in$ block}{\ScheduleStateless{txn}\; \label{line:router-schedule-stateless}}
}
\Proc{\Dispatch{plan}}{
  \ForEach{txn $\in$ plan.block}{ \label{line:router-dispatch-foreach}
    vers $\gets$ \PlanVers{txn}\; place $\gets$ \PlanPlace{txn}\; sources $\gets$ \PlanSources{txn}\; \label{line:router-plan-per-tx}
    \ForEach{(k,i) $\in$ vers}{\SetOwner{k, place}\; \label{line:router-setowner}}
    \SendToWorker{txn, vers, place, sources}\; \label{line:router-send}
  }
}
\end{algorithm}

\begin{algorithm}[t]
\caption{\worker main functions}
\label{alg:worker}
\Proc{\OnStateless(txn)}{token $\gets$ \Validate{txn}\;\label{line:worker-publish-token}}
\Proc{\OnStateful(txn, versions, sources)}{
  \EnsureOwnership{versions, sources}\; \label{line:worker-ensure-ownership}
  \AwaitAllAvailable{versions, sources}\; \label{line:worker-await-versions} token $\gets$ \AwaitToken{txn}\; \label{line:worker-await-token}
  \If{\Invalid{token}}{\ConsumeNoOp{versions}\; \Return\;} \label{line:worker-consume-noop}
  readSet $\gets$ \Read{versions.read}\;  newVals $\gets$ \Execute{txn, readSet, token}\; \label{line:worker-read-exec}
  \Write{versions.write, newVals}\; 
}
\Proc{\EnsureOwnership(versions, sources)}{
  \ForEach{(k,i) $\in$ versions}{ \label{line:worker-ensure-foreach}
    \While{not \HasVersion{self, k, i}}{ \label{line:worker-ensure-while}
      hint $\gets$ sources[k]\;  verReq $\gets$ \LatestLeq{i}\; \label{line:worker-ensure-hint}
      ok $\gets$ \PeerFetch{self, hint, k, verReq}\; \label{line:worker-peerfetch}
      \If{not ok}{\FetchFromRouter{self, k, verReq}\; \label{line:worker-fetch-from-router}}
    }
  }
}
\end{algorithm}


\begin{algorithm}[t]
\caption{Persistence and Recovery (on \router)}
\label{alg:persistence}
\Proc{\OnEpochTrigger(E)}{\RequestSnapshot{E}\; \CollectAndPersist{E, reports}\; \label{line:persist-snapshot}}
\Proc{\OnWorkerFailure(failed)}{ \label{line:persist-onworkerfailure}
  E $\gets$ \LatestSnapshotEpoch{}\;  new $\gets$ \SpawnWorkers{}\;  \Seed{new, snapshot[$E$]}\; \label{line:persist-spawn-seed}
  range $\gets$ \LogRange{E+1, head}\;  \Replay{new, range}\;  \PromoteReplacement{new}\; \label{line:persist-replay-promote}
}
\end{algorithm}

\begin{algorithm}[t]
\caption{Elasticity (on \router)}
\label{alg:elasticity}
\Proc{\ScaleOut()}{w $\gets$ \SpawnWorker{}\; \BiasStateless{w}\; \GraduallyPlaceStateful{w}\;}
\Proc{\ScaleIn(worker)}{\StopPlacingStateful{worker}\; \BiasStateless{worker}\; \WaitForLeasesToTransferOrExpire{}\; \Shutdown{worker}\;}
\end{algorithm}

\paragraph{Helpers (black boxes)}
\begin{itemize}[leftmargin=12pt, itemsep=2pt]
  \item \texttt{AssignVersions}: Assign per-key version numbers to each transaction in block order.
  \item \texttt{BuildDependencies}: Construct the intra-block dependency graph from read/write version relationships.
  \item \texttt{Partition}: Group the dependency graph into execution units that preserve dependencies.
  \item \texttt{PlaceUnits}: Map execution units to workers according to placement policy (load/locality/balance).

  \item \texttt{ScheduleStateless}: Enqueue stateless validation tasks for execution (pre-commit, no state).
  \item \texttt{Validate}: Run the stateless function; return a token or invalid.
  \item \texttt{PublishToken}: Make a transaction's validation token available to consumers.
  \item \texttt{AwaitToken}: Block until the transaction's token is available (or invalid).
  \item \texttt{AwaitAllAvailable}: Ensure all required versions are locally available by driving peer fetches using source hints and falling back to the \router.

  \item \texttt{ConsumeNoOp}: Consume assigned versions without changing state when validation is invalid.
  \item \texttt{Read}: Materialize values for the assigned read-set versions from local storage.
  \item \texttt{Execute}: Run the stateful logic using the read set and token to compute new values.
  \item \texttt{Write}: Persist new values to the assigned write-set versions and make them available.
  \item \texttt{Ack}: Acknowledge transaction completion to the coordinator.

  \item \texttt{Owner}: Return the current worker that holds the lease for a key.
  \item \texttt{SetOwner}: Update the authoritative owner of a key after transfer.
  \item \texttt{HasVersion}: Check if a worker holds a specific key version locally.
  \item \texttt{AuthorizeTransfer}: Approve a lease transfer from current to target owner. (\router-side metadata only.)
  \item \texttt{PeerFetch}: \worker-to-\worker transfer of the requested key state ($\le$ requested version) using the \router-provided source hint.
  \item \texttt{FetchFrom\router}: Fallback transfer of the requested key state ($\le$ requested version) from the \router when the current owner has checkpointed or is unavailable.

  \item \texttt{RequestSnapshot}: Instruct workers to snapshot at epoch $E$ and report. Ack receipt so that workers can garbage collect.
  \item \texttt{CollectAndPersist}: Aggregate snapshot reports and persist a global snapshot.
  \item \texttt{LatestSnapshotEpoch}: Return the latest fully persisted snapshot epoch.

  \item \texttt{SpawnWorkers}: Start replacement worker processes for recovery.
  \item \texttt{Seed}: Load snapshot state into newly spawned workers.
  \item \texttt{Replay}: Deterministically re-execute committed log entries since the snapshot.
  \item \texttt{PromoteReplacement}: Promote a single recovered worker to replace the failed one and shut down the rest.
  \item \texttt{Shutdown}: Cleanly stop a worker after it is drained.

  \item \texttt{BiasStateless}: Prefer routing stateless tasks to the specified worker.
  \item \texttt{GraduallyPlaceStateful}: Incrementally assign stateful work/leases to the worker.
  \item \texttt{StopPlacingStateful}: Stop assigning new stateful work to the worker.
  \item \texttt{WaitForLeasesToTransferOrExpire}: Wait until the worker's leases move or expire at epochs.
\end{itemize}

\section{Correctness Properties}

\system satisfies the following correctness properties:

\begin{theorem}[Consensus-ordered serializability]\label{thm:serializability}
For a correct validator, execution under \system of the transactions $\{ T_1,\dots,T_n \}$ yields the same state as if the transactions were executed sequentially, in consensus order.
\end{theorem}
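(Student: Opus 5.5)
We argue by induction on the committed consensus order $\langle T_1,\dots,T_n\rangle$. The invariant is that the value bound to each object version $(k,i)$ in any \worker's store equals the value of $k$ in the sequential execution $\sigma$ after the last $T_j$ with $k\in R(T_j)\cup W(T_j)$ that is assigned an output version $(k,i)$. If no such $T_j$ exists, the value is the snapshot or initial value.

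First we establish the version structure produced by the \router. The \router walks the committed order exactly once in \AssignVersions (\S\ref{sec:version-assignment}). By A1 this order is unique, and a speculative assignment is adopted only when \MatchesProposal holds, so it coincides with the committed one. Two facts follow for every key $k$:
\begin{itemize}[leftmargin=12pt, itemsep=2pt]
  \item \emph{Total order.} The transactions accessing $k$ receive consecutive versions whose order agrees with consensus order.
  \item \emph{Single consumption.} Each version $(k,i)$ has exactly one producer and exactly one consumer.
\end{itemize}
Consequently, the input version that $T_j$ receives for $k$ is produced by the unique immediate predecessor of $T_j$ on $k$ in consensus order, or is the initial version if there is none. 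We also record that pre-consensus work is side-effect free (\S\ref{sec:execution-rule}): it touches only stateless validation, and by A2 and A3 the token $\tau_T$ is a deterministic function of $T$.

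Next we prove the per-transaction step, by induction along the per-key dependency DAG rather than real time. This DAG is acyclic because all of its edges point forward in consensus order. Fix $T_j$ and assume the invariant holds for every version it consumes. By the execution rule, $T_j$ runs only once every assigned input version is locally present, so it reads exactly those versions. Its token is valid or invalid exactly as in $\sigma$, and in the invalid case it performs the same no-op that $\sigma$ does. By A2 its outputs equal those of $T_j$ in $\sigma$ applied to the state $\sigma_{j-1}$, restricted to $R(T_j)\cup W(T_j)$. This restriction is justified because the declared read/write sets bound every access. Transactions that are unrelated on $k$ cannot interfere with $k$ in either execution. Hence the versions $T_j$ produces satisfy the invariant.

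The main obstacle is showing that the value a \worker holds under the name $(k,i)$ really is the one produced by $T_j$'s predecessor, despite lease transfers, fetches of $(k,i')$ with $i'\le i$ via \LatestLeq, and \router fallbacks. We plan to prove a lemma: at any time, at most one node is authorized to produce a given version, because ownership is exclusive and is updated atomically by \SetOwner at dispatch. In addition, any copy of key $k$ that is transferred is either exactly the requested version or an older version that the receiving \worker then advances by locally awaiting the intermediate producers.

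The same reasoning must cover recovery. The replacement \worker is seeded with the snapshot at epoch $E$, which is correct by the induction at that log prefix. It then deterministically replays the log suffix with the same version assignment, so by the step above it regenerates identical versions. Any duplicate version produced by a healthy \worker during replay carries the same value, so either copy may be consumed.

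Finally, we take the last version of each key after $T_n$. By the invariant, this version equals the value of $k$ in the final sequential state $\sigma_n$, which proves the theorem.
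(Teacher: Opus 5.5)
Your proposal is correct and follows essentially the paper's route: consecutive single-consumption versions, safety of the adopted speculative assignment, no pre-commit state effects, a version-value-uniqueness argument so that re-execution during recovery regenerates identical versions, and an induction along consensus order driven by determinism (A2). The lease-transfer lemma you flag as the main obstacle is extra care that the paper sidesteps by building it into the \worker execution rule (\S\ref{sec:execution-rule}); note, though, that in normal operation \textsc{EnsureOwnership} keeps re-fetching until it holds exactly $(k,i)$, because the producer of $(k,i)$ is placed on the source \worker named in the dispatch, so your ``older version advanced locally'' case arises only when a replacement \worker is seeded from a snapshot and replays the log.
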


\begin{theorem}[Cross-validator determinism]\label{thm:determinism}
No two correct validators that execute the same sequence of transactions $\langle T_1,\dots,T_n\rangle$ yield different states.
\end{theorem}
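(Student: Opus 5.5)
The plan is to reduce Theorem~\ref{thm:determinism} to Theorem~\ref{thm:serializability}, together with assumptions A1--A2. Consider two correct validators $V$ and $V'$ that execute the same committed sequence $\langle T_1,\dots,T_n\rangle$ from the same initial state.

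By Theorem~\ref{thm:serializability}, the final state of $V$ equals $S_n$. Here $S_0$ is the common initial state, in which every key $k$ has value $(k,0)$, and $S_j$ is obtained by sequentially applying $T_j$ to $S_{j-1}$. The same theorem gives that $V'$ also ends in $S_n$. The argument therefore comes down to showing that the sequential reference execution is a function only of $(S_0,\langle T_1,\dots,T_n\rangle)$ and of nothing validator-local.

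I would establish this by induction on $j$. Each step applies the stateless step $\mathrm{Validate}(T_j)$ and then the stateful step on the values of $R(T_j)\cup W(T_j)$ in $S_{j-1}$. By A2 the validation outcome (token or invalid) depends only on $T_j$'s inputs. By A3 the token is bound to $T_j$, so the stateful step receives equivalent inputs on both validators. If validation is invalid, the stateful step is a no-op on values, so $S_j=S_{j-1}$ in both. Otherwise A2 makes the written values a deterministic function of the read values and the token. Hence $S_j$ coincides on $V$ and $V'$.

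I would then make explicit what is allowed to differ between validators: worker count, placement decisions from \policy, lease transfers, snapshot epochs, failures and recoveries, and speculative plans. The versions themselves are also validator-local, but they are computed by \texttt{AssignVersions} from the order alone. None of these quantities appear in the definition of $S_j$; they affect only \emph{how} Theorem~\ref{thm:serializability} is realized. The observable state is the projection of the versioned map onto the latest value of each key, which is version-numbering independent.

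The main obstacle is making sure Theorem~\ref{thm:serializability} really applies verbatim to each validator, including runs with worker crashes and replays and runs where the speculative plan was discarded. I would handle this by noting that the theorem is stated for any correct validator, with A9 ruling out coordinator loss. Speculative work is side-effect free by the execution rule in \S\ref{sec:execution-rule}. Replay is deterministic re-execution from a persisted snapshot, which is itself a prefix state $S_E$, so the induction resumes from $S_E$. A1 guarantees that both validators indeed hold the same sequence, with no forks.
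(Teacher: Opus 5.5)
Your proposal is correct and follows essentially the same route as the paper. You apply Theorem~\ref{thm:serializability} to each validator to reduce to sequential execution in consensus order, then use the determinism assumption (A2) to conclude that the two sequential executions, including the consuming no-ops for invalid transactions, produce identical final states; your explicit induction on $j$ and your list of validator-local quantities (placement, leases, epochs, recovery, speculative plans, version numbering) spell out what the paper compresses into a single step.
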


\begin{theorem}[Liveness]\label{thm:liveness}
For every committed $T_i$, \system eventually executes $T_i$ and incorporates its effect into the global state.
\end{theorem}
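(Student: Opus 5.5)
The liveness claim (Theorem~\ref{thm:liveness}) is proved by induction on the position $i$ of $T_i$ in the consensus order. The claim to establish is: for each committed $T_i$ there is a finite time after which the stateful part of $T_i$ has executed on some live \worker, and its output versions have been folded into the global state. Folding means either held by a live owner or persisted by the \router at a later epoch. By A1, every $T_i$ is eventually committed and delivered to the \router. \OnCommittedBlock then either adopts the speculative plan or recomputes it. Both paths are finite local computations, so \Dispatch eventually sends $T_i$ with its versions, placement and source hints. By A4, the message eventually reaches the assigned \worker.

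\textbf{Readiness of inputs.} Fix $T_i$ and a key $k\in R(T_i)\cup W(T_i)$ with assigned version $(k,v)$. By the version assignment rule (\S\ref{sec:version-assignment}), $(k,v)$ is produced either by the initial state or by the unique $T_j$, $j<i$, that accessed $k$ immediately before $T_i$. The induction hypothesis says $T_j$ has executed. Its output is therefore held by the \worker that was owner at $T_j$'s dispatch, or has been persisted at the \router.

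\EnsureOwnership loops until \HasVersion holds. \PeerFetch contacts the hinted source; if that fails, \FetchFromRouter is invoked. The router fallback succeeds whenever the version has been checkpointed. I would argue, via the single-consumption property and exclusive ownership, that the hinted source still holds $(k,v)$ unless it checkpointed, since no other transaction can consume it. The stateless token is likewise eventually produced by A5, and if its worker crashed, stateless work is re-run on replay. Hence \AwaitAllAvailable and \AwaitToken terminate, and by A5 the stateful step, or \ConsumeNoOp, completes.

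\textbf{Main obstacle: crashes.} Crashes of the executing \worker or of a source \worker are where the argument can break. Here I would use A6 and A8. By A6, a crashed worker is eventually suspected, which triggers \OnWorkerFailure. Recovery seeds a replacement from the latest persisted snapshot $E$ (durable by A9). It then replays the committed log from $E+1$ to the head, and every transaction of the crashed worker from that suffix is among these. Replay reproduces identical versions by A2 and the determinism argument of Theorems~\ref{thm:serializability} and~\ref{thm:determinism}. Lost versions are therefore regenerated, and any stuck fetch eventually succeeds from the replacement or from the \router.

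A8 guarantees that eventually recovery runs to completion without interruption. Consequently only finitely many recovery rounds delay $T_i$, and after the last one the crash-free argument applies. The delicate point I expect to spend the most care on is \emph{the hint being stale after recovery}: a waiting \worker must not loop forever on a dead hint. I would handle this by arguing that after recovery the \router's owner map is reset to the replacement or to itself, so \FetchFromRouter or a refreshed hint eventually serves the version.

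Finally, incorporation into global state follows from the epoch mechanism. The next \OnEpochTrigger after $T_i$ executes collects the latest versions of modified objects and persists them, possibly overwritten by later transactions' versions that already include $T_i$'s effect. This completes the induction.
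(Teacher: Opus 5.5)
Your proposal is correct and follows essentially the same route as the paper. It proceeds by induction on consensus position: dispatch is guaranteed by finite planning plus eventual delivery (the paper's scheduler-fairness step), input versions become available by the inductive hypothesis and the version-assignment rule, local liveness completes the token and the stateful step, and worker crashes are handled by failure detection, snapshot-seeded replay of the committed log, and the stability assumption A8. Your extra care about stale source hints after recovery, and about epoch-based persistence for the ``incorporates its effect'' clause, goes slightly beyond what the paper spells out but does not change the structure of the argument.
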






\section{Proofs}

\subsection{Serializability}

We prove Theorem~\ref{thm:serializability}. Let $\langle T_1,\dots,T_n\rangle$ be the total transaction order produced by consensus. Let $R(T)$ and $W(T)$ denote the read/write sets of transaction $T$ (\S\ref{sec:system-model}). The \router assigns versions according to \S\ref{sec:version-assignment} and \workers execute according to \S\ref{sec:execution-rule}. We begin with the following helpful invariants.

\begin{lemma}[Per-key versions are consecutive]\label{lem:per-key-linearity}
For every key $k\in\mathcal{K}$, the sequence of versions $(k,0),(k,1),\dots$ is a linear chain with no gaps (i.e., the versions are consecutive), and each version is assigned at most once in the consensus order of the transactions that touch $k$. 
\end{lemma}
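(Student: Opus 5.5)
We argue directly from the version-assignment rule of \S\ref{sec:version-assignment}, by induction over the single linear walk the \router performs on the committed order $\langle T_1,\dots,T_n\rangle$. Fix a key $k$ and let $T_{j_1},T_{j_2},\dots,T_{j_m}$ (with $j_1<j_2<\dots<j_m$) be the subsequence of committed transactions with $k\in R(T_{j_r})\cup W(T_{j_r})$. The counter $\mathrm{ver}[k]$ starts at $0$, matching the initial version $(k,0)$. Between two consecutive such transactions, $\mathrm{ver}[k]$ is modified only when the walk reaches a transaction touching $k$, and each such step increments it by exactly $1$.

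The inductive claim is the following: just before processing $T_{j_r}$, we have $\mathrm{ver}[k]=r-1$, and the versions assigned so far for $k$ are exactly $(k,1),\dots,(k,r-1)$, each held by a distinct transaction. When the \router processes $T_{j_r}$, it increments the counter to $r$ and assigns $(k,r)$. Since the counter is strictly increasing, $(k,r)$ differs from every earlier assignment, so no version is assigned twice. The set of assigned indices after step $r$ is $\{1,\dots,r\}$, so there are no gaps. Together with $(k,0)$, this gives the consecutive chain $(k,0),(k,1),\dots,(k,m)$. The chain is ordered consistently with the consensus order because the assignment index $r$ is monotone in $j_r$. Taking the transaction producing each version via the execution rule of \S\ref{sec:execution-rule} gives the linear predecessor structure.

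Two side issues need attention. The first is that the counter increments once per key per transaction, which requires treating $R(T)\cup W(T)$ as a set, so a key that is both read and written is counted once. The second, which I expect to be the main obstacle, is that the counters must not be perturbed by anything other than committed transactions in order. I would handle this in three parts:
\begin{itemize}
\item Speculative planning uses the separate counters $\mathrm{ver}^{\text{spec}}$. Its result is adopted only when the committed block matches the proposal, in which case it equals what the committed walk would produce; otherwise it is discarded and recomputed.
\item Invalid transactions still receive and consume their versions through the no-op (\S\ref{sec:transactions-separation}), so they do not create gaps at the \router.
\item Recovery replay re-executes already assigned versions and does not re-run $\mathrm{ver}[k]$ increments.
\end{itemize}
Given these, the invariant holds across blocks, since the walk simply continues the counter from block to block, which completes the lemma.
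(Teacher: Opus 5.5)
Your proof is correct and follows the paper's route: the paper just appeals to the assignment rule (one increment of $\mathrm{ver}[k]$ per transaction touching $k$, in a single walk over the consensus order), and your induction on the touching subsequence $T_{j_1},\dots,T_{j_m}$ spells that out. The side issues you raise are handled in the paper by separate lemmas rather than here: Lemma~\ref{lem:spec-safe} covers speculation, Lemma~\ref{lem:prefix} covers invalid transactions, and Lemma~\ref{lem:version-uniqueness} covers replay. One caveat on your speculation bullet: adopting a speculative plan is only safe if $\mathrm{ver}^{\text{spec}}$ agreed with $\mathrm{ver}$ at the start of that block, which is why the design discards all intermediate speculative schedules after a dropped block.
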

\begin{proof}
This follows directly from the \router's assignment rule: for each $T_i$ and each $k\in R(T_i)\cup W(T_i)$, the \router increments $\mathrm{ver}[k]$ once and assigns the resulting $(k,\mathrm{ver}[k])$ to $T_i$ (\S\ref{sec:version-assignment}), in consensus order.
\end{proof}

\begin{lemma}[Read from latest access]\label{lem:reads-latest}
  When a transaction $T$ reads key $k$, it observes the value produced by the latest transaction that precedes $T$ in the consensus order and accesses $k$, or the initial value if no such transaction exists.
\end{lemma}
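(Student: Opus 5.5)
We prove Lemma~\ref{lem:reads-latest} by induction along the per-key version chain of Lemma~\ref{lem:per-key-linearity}, and we establish an invariant slightly stronger than the statement. Fix a key $k$ and let $T_{j_1},T_{j_2},\dots$ with $j_1<j_2<\cdots$ be the transactions that access $k$, listed in consensus order. By the assignment rule of \S\ref{sec:version-assignment} and Lemma~\ref{lem:per-key-linearity}, $T_{j_m}$ is the unique consumer of the input version $(k,m-1)$, and it is the unique producer of the next version $(k,m)$. We read the execution rule of \S\ref{sec:execution-rule} this way: every access creates a new version. For a read-only access (or a no-op caused by an invalid token), the produced version carries the same value as the one consumed. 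The invariant we prove by induction on $m$ is
\[
\mathrm{val}(k,m)=\text{the value of } k \text{ after sequentially executing } T_1,\dots,T_{j_m},
\]
with $\mathrm{val}(k,0)$ equal to the initial (or snapshot) value.

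\textbf{Base case.} Version $(k,0)$ is the initial value, so the invariant holds for $m=0$.

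\textbf{Inductive step.} Suppose $T_{j_m}$ reads $k$. By the execution rule, it runs only once its assigned input version $(k,m-1)$ is locally available. That version is available either because it was produced locally or because it was fetched through \texttt{EnsureOwnership}, via \texttt{PeerFetch} or the \router fallback. In both cases the bytes obtained are exactly $\mathrm{val}(k,m-1)$. The reason is that a version is written once, by its unique producer $T_{j_{m-1}}$, and is never mutated afterwards. This follows from single consumption together with the fact that garbage collection only discards versions that have already been consumed. By the inductive hypothesis, $\mathrm{val}(k,m-1)$ is the value after $T_1,\dots,T_{j_{m-1}}$. No transaction strictly between $j_{m-1}$ and $j_m$ touches $k$, so this is also the value just before $T_{j_m}$ in the sequential execution, and the lemma follows. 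To close the induction, we note that $T_{j_m}$ produces $(k,m)$ from the inputs it read. By assumption A2, the value it computes is the sequential one. If the token is invalid, $T_{j_m}$ produces the unchanged value, which matches sequential semantics in which an aborted transaction has no effect.

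\textbf{Main obstacle.} The hard part is justifying that a fetched version is exactly the value produced by its producer, rather than some other version of $k$. \texttt{LatestLeq} may return a version $(k,i')$ with $i'\le i$, and fetches from the \router may return snapshot state. I would handle this in two cases. In the first case, $i'<i$. Then the intervening versions were produced only by accesses that left the value of $k$ unchanged, or else they were already folded into the snapshot; this gives $\mathrm{val}(k,i')=\mathrm{val}(k,i)$. In the second case, the data comes from a snapshot. A snapshot stores the latest versions as of an epoch boundary, and those versions obey the invariant by induction. Recovery (Algorithm~\ref{alg:persistence}) then reconstructs every later version by deterministic replay of the same version assignments. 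I expect this case analysis to rest on the fact that the \router updates $\mathrm{owner}[k]$ only in dispatch order. Under that fact, the current owner named in the source hint holds the most recent produced version at the time it is fetched.
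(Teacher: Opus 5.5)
The step that actually proves the lemma is the paper's entire argument. Since $\mathrm{ver}[k]$ advances once per access in consensus order (Lemma~\ref{lem:per-key-linearity}), the input version of $k$ assigned to $T$ is the one produced by the latest preceding accessor of $k$ (or $(k,0)$). The execution rule of \S\ref{sec:execution-rule} then makes $T$ wait for and read exactly that version. Your inductive step already contains this (``it runs only once its assigned input version $(k,m-1)$ is locally available''). That suffices, because the statement only concerns \emph{which} version $T$ observes. The problems lie in what you build on top of it.

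First, the strengthened invariant cannot be closed by induction on $m$ along the chain of a single fixed key. When you invoke A2 to conclude that $T_{j_m}$ writes the sequential value into $(k,m)$, all of its inputs must be correct, including reads of keys $k'\neq k$ (say, a transfer reading $k'$ and writing $k$). Your hypothesis on $(k,m-1)$ says nothing about those. The induction has to run over consensus positions for all keys at once, which is exactly the paper's separate Lemma~\ref{lem:prefix}. The same cross-key ordering is needed for your remark that replay reconstructs identical versions (Lemma~\ref{lem:version-uniqueness}), because ``written once'' is not literally true once transactions are re-executed.

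Second, your resolution of the ``main obstacle'' is false. If a fetch returns $(k,i')$ with $i'<i$, the intervening accessors may be arbitrary writers, so $\mathrm{val}(k,i')=\mathrm{val}(k,i)$ fails in general. In fact the obstacle does not arise. \texttt{EnsureOwnership} loops while \texttt{HasVersion}$(\mathit{self},k,i)$ is false, and the execution rule gates on the exact assigned version, so a stale $(k,i')$ is never what $T$ reads; $(k,i)$ is later produced locally or re-fetched. Likewise, your claim that the hinted owner holds the latest produced version at fetch time is not true in general, since the predecessor dispatched there may not have run yet. In any case it bears only on liveness, not on which value $T$ reads. Drop the strengthening and the case analysis, and what remains is the paper's proof.
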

\begin{proof}
  Let $T'$ be the latest transaction that precedes $T$ in the consensus order and accesses $k$, or the initial state if no such transaction exists. By the version assignment rule (\S\ref{sec:version-assignment}), versions are assigned consecutively in consensus order. Thus, $T$ reads the value produced by $T'$. 
\end{proof}

\begin{lemma}[No pre-commit state effects]\label{lem:no-precommit}
Pre-consensus actions do not read or mutate state, nor consume versions or transfer leases, hence cannot affect post-commit execution results.
\end{lemma}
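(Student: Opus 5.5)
The plan is to prove Lemma~\ref{lem:no-precommit} by inspecting every code path that the pseudocode can run before a block commits, and checking each against the execution rule of \S\ref{sec:execution-rule}. Concretely, the pre-consensus actions are exactly the body of \OnProposedBlock (Algorithm~\ref{alg:router}). That body has two steps: \RunStateless, at line~\ref{line:router-runstateless}, and \PlanBlock, at line~\ref{line:router-planblock-spec}. Everything else (\Dispatch, \SetOwner, \SendToWorker, and the worker procedure \OnStateful) is reachable only from \OnCommittedBlock. The first step is therefore to record this call-graph fact: the only procedures that can touch state, versions or leases are invoked exclusively after commitment.

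The second step treats the stateless path. \RunStateless only calls \ScheduleStateless, which causes a \worker to run \OnStateless, i.e.\ \Validate. By the model in \S\ref{sec:transactions-separation}, \Validate is a function of the transaction alone and returns a token or invalid. It issues no \Read or \Write, does not call \EnsureOwnership, and consumes no versions. Its only output is a token, published for later consumption. By A2 this token is a deterministic function of $T$, and by A3 it is bound to $T$'s inputs. Hence, if the block later commits, the token equals the one a post-commit validation would have produced. If the block is rejected, the token is simply never awaited, so it cannot influence any post-commit result.

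The third step treats the scheduling path. \PlanBlock calls \AssignVersions, \BuildDependencies, \Partition and \PlaceUnits. Per \S\ref{sec:version-assignment}, speculative assignment writes only to the separate counters $\mathrm{ver}^{\text{spec}}[\cdot]$, and placement produces a plan value without invoking \SetOwner. The authoritative $\mathrm{ver}[\cdot]$ and $\mathrm{owner}[\cdot]$ are therefore untouched, and no lease moves. In \OnCommittedBlock there are two cases. If \MatchesProposal holds, the adopted plan is identical to what \PlanBlock would compute on the committed block from the same metadata, because the computation is deterministic (A2). If it does not hold, the plan is recomputed from authoritative metadata and the speculative copy is discarded.

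The main obstacle is the adopted-plan case. Identity of the plan requires that the scheduler's metadata at speculation time equals the router's metadata at commit time, so that $\mathrm{ver}^{\text{spec}}$ coincides with $\mathrm{ver}$ on the committed prefix. I would argue this from the two-metadata-copy mechanism of \S\ref{sec:design:consensus-window}. On any rejection, the scheduler resynchronizes its copy to the router's before resuming. By induction on committed blocks, every adopted speculative plan was computed from metadata reflecting exactly the previously committed blocks. Any residual difference in placement affects only where transactions run, not which versions they consume, so post-commit results are unaffected.
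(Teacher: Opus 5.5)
Your argument is correct, but it is organized differently from the paper's.

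\textbf{The paper's route.} The paper's proof is a one-line appeal to the invariant already written into the execution rule (\S\ref{sec:execution-rule}): before commitment, \workers perform only stateless validation, with no reads, writes, lease transfers, or version consumption. It says nothing about the \router's speculative \textsc{PlanBlock}; that path is handled separately in Lemma~\ref{lem:spec-safe}.

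\textbf{Your route.} You derive the invariant from the pseudocode call graph rather than assuming it: only \textsc{OnCommittedBlock} reaches \textsc{Dispatch}, \texttt{SetOwner}, and \textsc{OnStateful}. You then also cover the router-side speculative planning, which folds Lemma~\ref{lem:spec-safe} into this lemma.

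\textbf{What your route buys.} It makes explicit a hypothesis that the paper's proof of Lemma~\ref{lem:spec-safe} silently relies on: $\mathrm{ver}^{\text{spec}}$ must agree with $\mathrm{ver}$ when speculation on a block begins. You justify this from the resynchronize-and-discard mechanism of \S\ref{sec:design:consensus-window}, with an induction over committed blocks. This is a genuine strengthening of the paper's reasoning.

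\textbf{One overreach.} You claim the adopted plan is identical to a recomputed one ``by A2''. A2 concerns transaction execution, not the \router's planning. Moreover, \policy breaks ties randomly and scores workers by time-varying loads, so the placement part of an adopted plan need not match a recomputation. Only the version component is guaranteed identical, and that follows from \texttt{AssignVersions} being a deterministic linear scan, not from A2. Your closing remark, that placement affects only where transactions run and not which versions they consume, supplies what is needed. The argument goes through once you restrict the identity claim to the version assignment.
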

\begin{proof}
By \S\ref{sec:execution-rule}, \workers only perform stateless validation before consensus (\algline{alg:router}{line:router-schedule-stateless}); they do not perform any state reads/writes, transfer any leases, or consume  any versions before commitment.
\end{proof}

\begin{lemma}[Speculative assignment is safe]\label{lem:spec-safe}
The version assignment of the \router equals that induced by the committed order.
\end{lemma}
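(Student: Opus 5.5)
The plan is a case analysis on the branch taken in \OnCommittedBlock (\algline{alg:router}{line:router-adopt-spec} versus \algline{alg:router}{line:router-recompute}), after first making precise what ``the version assignment induced by the committed order'' means. Define it as the function $\mathrm{vers}^\ast$ obtained by running the rule of \S\ref{sec:version-assignment} on the committed order. That rule starts from the committed counters $\mathrm{ver}[\cdot]$ as they stand after all previously committed blocks. By Lemma~\ref{lem:per-key-linearity}, $\mathrm{vers}^\ast$ is a deterministic function of two inputs only: the committed order of the block, and the counter state at the block boundary.

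\emph{Recompute branch.} If \MatchesProposal fails, the \router discards the speculative plan and calls \PlanBlock on the committed block, which invokes \AssignVersions on the committed order. Two facts then give $\mathrm{vers}=\mathrm{vers}^\ast$. First, speculative work uses only the separate counters $\mathrm{ver}^{\text{spec}}[\cdot]$. Second, by Lemma~\ref{lem:no-precommit}, nothing done before commitment consumes versions or changes committed metadata. Hence the committed counters at the block boundary are exactly those left by earlier committed blocks. This relies on the paper's statement that, on a dropped block, the scheduler resynchronizes its metadata with the router's before resuming.

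\emph{Adopt branch.} If \MatchesProposal succeeds, the block's committed transaction sequence equals the proposed one, so both computations run the same walk over the same key accesses. It remains to show that $\mathrm{ver}^{\text{spec}}$ at the start of the speculative walk equals $\mathrm{ver}$ at the commit boundary. I would prove this by induction over committed blocks, with the invariant that whenever the scheduler begins planning a block, its speculative counters equal the committed counters after the last committed block preceding it in commit order. The inductive step has two cases. If the previous block's speculation was adopted, both counter sets advanced by the identical walk, so they still agree. If it was rejected, the resynchronization step restores equality. Given the invariant, determinism of the assignment rule (assumption A2 applied to the \router's deterministic procedure) yields identical outputs.

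The main obstacle is the invariant in the adopt branch when the scheduler has planned several proposed blocks ahead of commitment. A later speculative plan could then be based on an earlier proposal that ends up rejected. I would handle this by strengthening the meaning of \MatchesProposal: it must check not only the block's contents but also that its speculative base, namely the chain of previously adopted proposals, matches the committed prefix. Otherwise a resync invalidates all downstream speculative plans, which the paper's ``discards intermediate schedules'' permits. With this reading, any adopted plan was computed from the committed counter state, and the lemma follows in both branches.
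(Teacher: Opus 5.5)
Your proposal is correct and uses the same case split as the paper: adopt the speculative plan when \texttt{MatchesProposal} succeeds, otherwise recompute on the committed order. The paper's one-line proof only asserts that adoption is ``equivalent to recomputing on the committed order,'' whereas you also justify that the speculative counters $\mathrm{ver}^{\text{spec}}[\cdot]$ start from the committed counter state at the block boundary, including the pipelined case where later plans must be discarded after a rejection; the paper leaves this implicit, and one small point is that determinism of the assignment rule follows directly from its definition in \S\ref{sec:version-assignment}, not from Lemma~\ref{lem:per-key-linearity} or assumption A2 (which concerns transaction execution).
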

\begin{proof}
If a proposed block's order matches the committed order, adopting the speculative version plan is equivalent to recomputing on the committed order; otherwise, speculation is discarded and the committed-order plan is used. In either case, the effective version assignment equals that induced by the committed order.
\end{proof}

\begin{lemma}[Version value uniqueness]\label{lem:version-uniqueness}
For any key $k$ and version index $i$, if version $(k,i)$ exists at multiple locations in the system (e.g., at different workers, in persistent storage, or produced by re-execution during recovery), all copies have identical values.
\end{lemma}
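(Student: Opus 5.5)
I will prove Lemma~\ref{lem:version-uniqueness} by strong induction on the version index $i$, for each fixed key $k$. The guiding observation is that every copy of $(k,i)$ ultimately originates from a single \emph{production event}. A production event is either the initial state (for $i=0$), or the execution, possibly a re-execution during recovery, of the unique transaction $T$ that consumes $(k,i-1)$. By Lemma~\ref{lem:per-key-linearity}, $(k,i-1)$ is assigned to exactly one transaction in consensus order, and by Lemma~\ref{lem:spec-safe} this assignment is the one induced by the committed order regardless of speculation. Every other way a copy of $(k,i)$ can come into existence is a \emph{copy event}: a \texttt{PeerFetch} from a worker, a \texttt{FetchFromRouter}, the persistence of an epoch snapshot, or seeding a replacement worker with snapshot state. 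A copy event transfers bytes verbatim and never creates a new value. So the first step is a case analysis over the mechanisms in Algorithms~\ref{alg:router}--\ref{alg:persistence}, showing each one is either a production event of the above form or a copy of an already existing version.

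For the inductive step, I fix two copies of $(k,i)$ and trace each back through its chain of copy events to a production event. If $i=0$, both chains end at the initial value, and the two copies agree. If $i>0$, both chains end at an execution of the same transaction $T$, the unique consumer of $(k,i-1)$. I then argue that the two executions of $T$ received identical inputs. Their input versions $(k',j)$ for $k'\in R(T)\cup W(T)$ are the same pairs, because version assignment is a deterministic function of the committed order (Lemma~\ref{lem:spec-safe}); the recovery \texttt{Replay} reuses the same committed log and hence the same assignment. Each input version $(k',j)$ has $j<i$ when $k'=k$, but for other keys the indices are unrelated. So the induction should really run over the position of the producing transaction in consensus order rather than over $i$ for a single key. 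Each input of $T$ was produced by a transaction strictly earlier in the order, or is initial/snapshot state, and by the induction hypothesis all copies of these inputs have identical values. The validation token is a deterministic function of $T$ by A2 and is bound to $T$ by A3, so the valid versus no-op branch is also identical. Finally, by A2, $\texttt{Execute}$ yields identical outputs, so both executions write the same value to $(k,i)$. Lemma~\ref{lem:no-precommit} rules out any pre-consensus production of stateful versions.

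The main obstacle is the recovery path. A snapshot stores only the \emph{latest} version of each modified object, and the replacement worker may re-execute a transaction whose original execution already produced copies that were fetched by healthy workers. I must check that snapshot contents are themselves copies of genuinely produced versions, labeled with their true index, so that the \texttt{Seed} and \texttt{FetchFromRouter} steps are copy events. I must also check that \texttt{LatestLeq} fetches return a version under its correct label rather than relabeling it. I will handle this by strengthening the invariant to ``every stored pair $((k,i),v)$ anywhere in the system satisfies $v=\mathrm{val}(k,i)$,'' where $\mathrm{val}$ is defined by sequential execution in consensus order. I then verify that each system step preserves this invariant; that verification is the inductive argument above, carried out over the global order of events.
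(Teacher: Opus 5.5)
Your proof is correct and follows essentially the same route as the paper. The paper takes a minimal counterexample in consensus order of the producing transactions, uses single-consumption to identify the unique producer $T$, uses minimality to get identical inputs across executions of $T$, and uses the same token together with determinism to get identical outputs, which is exactly your induction on consensus position. Your explicit treatment of copy events (peer fetch, \texttt{FetchFromRouter}, snapshot persistence, seeding) and your strengthened invariant $v=\mathrm{val}(k,i)$ make precise a step the paper leaves tacit, namely that every copy of $(k,i)$ traces back, under its true label, to an execution of that unique producer; as a side benefit, the invariant essentially already yields the later prefix-equivalence lemma.
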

\begin{proof}
By contradiction. Suppose there exist two copies of version $(k,i)$ with different values $v$ and $v'$. Let $(k,i)$ be the earliest such version (in consensus order of producing transactions) where a disagreement occurs.

By the single-consumption property (Lemma~\ref{lem:per-key-linearity}), exactly one transaction $T$ in the consensus order produces version $(k,i)$ by consuming $(k,i-1)$ and writing to $k$.

Transaction $T$ may be executed multiple times (e.g., once before a failure and once during recovery). In each execution, $T$ reads the same input versions: for each $k' \in R(T)$, it reads version $(k',j_{k'})$ where $j_{k'}$ is determined by the version assignment. By minimality of $(k,i)$, all versions earlier than $(k,i)$ in consensus order have unique values, so $T$ observes identical input values in each execution.

Since $T$ has the same validation token $\tau_T$ (determined by consensus) and reads identical input values, by the determinism assumption (\S\ref{sec:assumptions}), $T$ must produce the same output value for key $k$. Therefore, both copies of $(k,i)$ must have the same value. Contradiction.
\end{proof}

\paragraph{Constructing the sequential history.}
Let $\mathsf{H}$ be the execution history in which transactions are applied sequentially in consensus order. We show by induction on $i$ that the state after completing $T_i$ in \system equals the state after applying $T_1,\dots,T_i$ sequentially in $\mathsf{H}$, assuming the same validation outcomes for $T_1,\dots,T_i$ in both cases. Note that by Lemma~\ref{lem:version-uniqueness}, even if failures and recovery cause transactions to be re-executed, each version has a unique, well-defined value throughout the system, ensuring that the state remains consistent despite potential re-executions.

\begin{lemma}[Prefix equivalence]\label{lem:prefix}
For all $i\in\{0,\dots,n\}$, after \system has executed $T_1,\dots,T_i$, the value of every key version $(k,j)$ equals that in the sequential execution of $T_1,\dots,T_i$.
\end{lemma}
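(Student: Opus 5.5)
We argue by induction on $i$, as the paragraph above announces. We compare the multiversioned store produced by \system with the sequential history $\mathsf{H}$. The comparison is on every version $(k,j)$ whose producing transaction lies among $T_1,\dots,T_i$, together with the initial versions $(k,0)$. By Lemma~\ref{lem:spec-safe} we may assume the version assignment is exactly the one induced by the committed order. By Lemma~\ref{lem:no-precommit} we may ignore all pre-consensus activity, since it neither creates nor consumes versions. By Lemma~\ref{lem:version-uniqueness} it is meaningful to speak of ``the'' value of a version $(k,j)$ regardless of where it is stored or how many times it was produced.

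\emph{Base case} ($i=0$). No transaction has executed, so the only versions are $(k,0)$ for each $k$. They hold the initial value both in \system (seeded from the initial state or a snapshot of it) and in $\mathsf{H}$.

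\emph{Inductive step.} Assume the claim holds for $i-1$ and consider $T_i$. For each $k\in R(T_i)\cup W(T_i)$, the \router assigned $T_i$ the input version $(k,j_k)$. By Lemma~\ref{lem:per-key-linearity}, $j_k$ is exactly the number of transactions among $T_1,\dots,T_{i-1}$ that touch $k$. So $(k,j_k)$ is produced either by the latest such transaction $T'$ with $T'\prec T_i$, or it is the initial version. By the execution rule (\S\ref{sec:execution-rule}), $T_i$ runs only once all these versions are locally available, whether produced locally, peer-fetched, or fetched from the \router. By Lemma~\ref{lem:reads-latest}, the value $T_i$ reads is the one produced by $T'$. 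By the inductive hypothesis and Lemma~\ref{lem:version-uniqueness}, that value equals the value of $k$ in $\mathsf{H}$ just before $T_i$.

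The token $\tau_T$ of $T_i$ is the same in both executions, since it depends only on $T_i$ by A2 and A3 and we fixed the validation outcomes. By A2, $T_i$ therefore computes identical outputs in both executions. If the token is invalid, both perform a no-op. The versions $(k,j_k+1)$ written by $T_i$ then carry the same values as the corresponding post-$T_i$ state in $\mathsf{H}$, and all other versions are untouched.

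The main obstacle is making the identification between a version index and a point in the sequential history precise. This requires checking that no transaction other than $T_i$ can produce or overwrite $(k,j_k+1)$; single consumption in Lemma~\ref{lem:per-key-linearity} gives this. It also requires handling keys in $R(T_i)\setminus W(T_i)$: reads also advance the version, and the new version must carry the unchanged value, which we treat as a write of the read value. Finally, copies of a version fetched via lease transfer or via the \router's snapshot must agree, which is delegated to Lemma~\ref{lem:version-uniqueness}.
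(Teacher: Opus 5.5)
Your proposal is correct and follows essentially the same route as the paper's proof. Both argue by induction on $i$. In the inductive step, Lemma~\ref{lem:reads-latest} plus the inductive hypothesis match $T_i$'s inputs to the sequential prefix state, determinism (A2) together with Lemma~\ref{lem:spec-safe} matches the outputs, and an invalid $\mathrm{Validate}(T_i)$ is handled as a consuming no-op. Your added care points the paper leaves implicit without changing the argument: restricting the comparison to versions produced by $T_1,\dots,T_i$ (since \system may finish later, independent transactions first), treating read-only keys as writes of the read value, and invoking Lemma~\ref{lem:version-uniqueness} for fetched or re-executed copies.
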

\begin{proof}
By induction on $i$. Base $i=0$ holds because the initial state is the same. For the inductive step, fix $i>0$ and assume the claim for $i-1$. Consider $T_i$:
\begin{itemize}
  \item If $\mathrm{Validate}(T_i)$ is invalid, \system executes a no-op that consumes assigned versions (\S\ref{sec:transactions-separation}) so the state remains unchanged; the sequential execution also skips writes. By Lemma~\ref{lem:spec-safe}, the version assignment produced by the \router equals the one induced by consensus. Thus, the state produced by \system coincides with the state produced by the sequential execution after $T_i$.
  \item If $\mathrm{Validate}(T_i)$ is valid, by Lemma~\ref{lem:reads-latest}, the read set observed by $T_i$ in \system equals the values accessed by the latest predecessors in $\langle T_1,\dots,T_{i-1}\rangle$ (or initial values), which by the inductive hypothesis equal those in the sequential prefix state. By the determinism assumption (\S\ref{sec:assumptions}) and Lemma~\ref{lem:spec-safe}, the stateful computation produces the same outputs as the sequential execution would on the same inputs. \system then writes them to $(k,j+1)$ for $k\in W(T_i)$ and consumes $(k,j)$ as specified, matching the sequential step. Keys not touched by $T_i$ are unchanged. 
\end{itemize}
Thus, the state produced by \system coincides with the state produced by the sequential execution after $T_i$, completing the inductive step.
\end{proof}

\begin{proof}[Proof of Theorem~\ref{thm:serializability}]
By Lemma~\ref{lem:prefix} with $i=n$, the state after completing all transactions in \system equals the state after sequential execution in consensus order. Therefore, \system is serializable with respect to the consensus order.
\end{proof}

\subsection{Determinism}

We derive Theorem~\ref{thm:determinism} as a corollary of serializability and the determinism assumption (\S\ref{sec:assumptions}).

\begin{proof}[Proof of Theorem~\ref{thm:determinism}]
Let $V$ and $V'$ be two correct validators that execute the same committed sequence $\langle T_1,\dots,T_n\rangle$ from the same initial snapshot. By Theorem~\ref{thm:serializability}, each validator's execution is equivalent to executing $\langle T_1,\dots,T_n\rangle$ sequentially in consensus order. By the determinism assumption, stateless validation outcomes and stateful effects are deterministic functions of their inputs and reads; therefore both sequential executions produce identical per-transaction effects (invalid transactions perform consuming no-ops) and thus identical final states. Hence $V$ and $V'$ cannot diverge.
\end{proof}

\subsection{Liveness}

We prove Theorem~\ref{thm:liveness}. Intuitively, we first show that \system{'s} scheduler is fair and then, given fairness at the \router, eventual delivery in the network, and partial synchrony, we show that every committed transaction is eventually dispatched, validated, and its stateful step executed, despite worker failures.

\subsubsection{Fairness}
\begin{theorem}[Scheduler fairness]\label{thm:scheduler-fairness}
  Fix any finite committed block $B$. After \textsc{OnCommittedBlock}$(B)$ the \router eventually dispatches every transaction $T \in B$ to some \worker.
\end{theorem}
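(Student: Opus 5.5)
The argument proceeds by walking through \OnCommittedBlock{B} in Algorithm~\ref{alg:router} and showing that every step is a finite, non-blocking computation on the \router. The \router is always available by assumption A9, so it does not crash partway through. The key structural fact is that dispatch is a pure \router-side procedure: it never waits on a \worker's execution result, token, or acknowledgement before sending the next transaction. As a result, worker failures or slowness cannot stall it.

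First, the plan $B$ is obtained. There are two cases, following lines~\ref{line:router-adopt-spec} and~\ref{line:router-recompute}.
\begin{itemize}
  \item If \MatchesProposal{B} holds, the speculative plan is adopted. That plan was already computed by \PlanBlock on the same finite block.
  \item Otherwise, \PlanBlock{B} is recomputed.
\end{itemize}
In either case, \PlanBlock terminates for the following reasons.
\begin{itemize}
  \item \AssignVersions is one linear scan over the finite block and its finite read/write sets (\S\ref{sec:version-assignment}).
  \item \BuildDependencies builds a finite graph on $|B|$ vertices.
  \item \Partition extracts its connected components.
  \item \PlaceUnits runs Algorithm~\ref{alg:policy}. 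This algorithm loops once per subgraph, and for each subgraph once per \worker in the current finite pool. Every score is well defined, and $\arg\max$ over a nonempty finite set always returns a worker.
\end{itemize}
I would state explicitly that the \worker pool is nonempty at placement time. After a failure the \router spawns replacements, and scale-in never removes the last worker. This also ensures that every unit, and hence every transaction of $B$, receives a placement $\text{place}(T)$. This holds because the partition covers all vertices of $D(B)$.

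Second, I would show that \Dispatch{plan} reaches every $T\in B$. The outer loop at line~\ref{line:router-dispatch-foreach} iterates over the finite list plan.block $=B$. Each iteration consists of lookups in the plan, a finite loop of \SetOwner metadata updates at line~\ref{line:router-setowner}, and one \SendToWorker at line~\ref{line:router-send}. All of these are local operations on the \router and terminate. By A4 (eventual delivery, with retry), each sent message is eventually delivered to $\text{place}(T)$.

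The step I expect to be the main obstacle is the case where $\text{place}(T)$ crashes before or after delivery. Here I would be careful to note the scope of the claim: the theorem only asserts that the \router \emph{dispatches} to some \worker, and sending already satisfies that. I would add a remark that, if the target is suspected as failed (A6), \OnWorkerFailure in Algorithm~\ref{alg:persistence} places $T$ in the replay range $[E{+}1,\text{head}]$. This range includes all of $B$ once $B$ is committed, so $T$ is re-dispatched to a live replacement. A8 ensures that this retargeting eventually stabilizes.

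A secondary subtlety is interleaving with later committed blocks and pre-consensus planning. Lemma~\ref{lem:no-precommit} shows that speculation has no side effects that could block dispatch. The \router processes committed blocks in consensus order, so $B$ is eventually handled after finitely many predecessors, each of which terminates by the same argument. Concluding by induction over the committed prefix gives the theorem.
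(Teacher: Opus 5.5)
Your proposal is correct and takes essentially the same route as the paper's proof. In both branches of \textsc{OnCommittedBlock}, planning terminates because the dependency-graph components partition the finite block $B$ and each component is assigned a worker via an $\arg\max$ over a finite worker set; \textsc{Dispatch} then issues one \texttt{SendToWorker} per $T \in B$, which reaches its target by eventual delivery (A4).

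Your extra points (\router availability via A9, coverage of $B$ by the partition, a nonempty worker pool, and the crash/replay and interleaving remarks) make explicit what the paper leaves implicit and are not needed for the stated claim. One caution: the paper never states that scale-in keeps at least one \worker, so nonemptiness of the pool is an assumption you are adding (the paper assumes it tacitly too), not a fact you can cite from the system description.
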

  
\begin{proof}
  Upon commitment of $B$, the \router either adopts the precomputed stateful plan (if the proposal order matches) or recomputes it on the committed order. In either case, planning proceeds as follows: it builds the dependency graph $D(B)$ and extracts its disconnected components $\mathcal{G}$, which form a partition of the finite set $B$. It then iterates once over each $G \in \mathcal{G}$, computes a score $S_i$ for every worker $i$ from a finite worker set, and selects $i^\star=\arg\max_i S_i$ (ties broken arbitrarily), thereby assigning $G$—and thus every $T \in G$—to $i^\star$. Because both $\mathcal{G}$ and the worker set are finite, this placement completes in finite time and yields a target worker for every $T \in B$.
  
  The \router then executes dispatch over the finite set $B$, invoking \texttt{SendToWorker} once per $T$. By eventual delivery, each dispatch message reaches its assigned \worker. Therefore, every committed transaction $T \in B$ is eventually dispatched, establishing scheduler fairness.
\end{proof}

\subsubsection{Liveness Given Fairness}

\begin{remark}[No circular dependencies]\label{obs:no-circular}
  Transactions cannot deadlock due to circular version dependencies. 
\end{remark}
\begin{proof}
The version assignment rule ensures that transaction $T_i$ at position $i$ in consensus order $\langle T_1,\dots,T_n\rangle$ can only depend on versions produced by transactions $T_j$ where $j < i$. This creates a directed acyclic graph where dependency edges point strictly backward in consensus order. Since the dependency relation respects the total order of consensus, cycles are impossible.
\end{proof}

\begin{proof}[Proof of Theorem~\ref{thm:liveness}]
  We consider the failure-free case first and later extend to the general case. Let $\langle T_1,T_2,\dots\rangle$ be the transaction order output by consensus. We prove by induction on $i\geq 1$ that $T_i$ eventually executes.
  
  Base $i=1$. By scheduler fairness and eventual delivery (\S\ref{sec:assumptions}), the \router eventually dispatches the stateless and stateful steps of $T_1$ to some \worker or \workers. 
  By local liveness, the stateless step of $T_1$ eventually completes and its validation token becomes available to the stateful step. 
  For each required input key $k\in R(T_1)\cup W(T_1)$, the corresponding version is the initial version $(k,0)$, which is available at the \router's persisted state. The \worker requests these versions if necessary and eventually acquires them (\algline{alg:worker}{line:worker-ensure-ownership}--Line~\ref{line:worker-await-versions}), as well as the token (\algline{alg:worker}{line:worker-await-token}).
  With the token and input versions available and by local liveness, the \worker eventually executes the stateful step (or the consuming no-op, if invalid), completing $T_1$.
  
  Inductive step. Assume $T_1,\dots,T_{i-1}$ eventually execute, and consider $T_i$. By fairness and eventual delivery, $T_i$ is dispatched to some \worker. 
  The stateless step of $T_i$ is scheduled and, by local liveness, its validation token is eventually computed (\algline{alg:worker}{line:worker-publish-token}) and obtained by the \worker assigned the stateful step (\algline{alg:worker}{line:worker-await-token}). 
  For each key $k$ that $T_i$ requires as input, let $T'$ be the latest transaction to access $k$ preceding $T_i$ in consensus order (if none such transaction exists, interpret $T'$ as the creation of the  initial state).
  By the induction hypothesis, $T'$ has completed (note that $T'$ precedes $T_i$, so no circular dependency exists by Remark~\ref{obs:no-circular}), thus the required version for $k$ exists at its owner (or at the \router's persisted state). 
  The \worker requests this version if necessary and eventually acquires it (\algline{alg:worker}{line:worker-ensure-ownership}--Line~\ref{line:worker-await-versions}).
  With the token and inputs available, by local liveness, the \worker eventually performs the stateful step (or the consuming no-op if invalid), completing $T_i$. This completes the induction.

  We now allow \worker failures (crash-stop) and argue that failures do not prevent the eventual completion of any committed $T_i$. Assume that a worker $W$ executing the stateless or stateful step of $T_i$ crashes. The \router detects the failure and initiates recovery (\algline{alg:persistence}{line:persist-onworkerfailure}). It spawns replacements and seeds them from the latest persisted snapshot (\algline{alg:persistence}{line:persist-spawn-seed}), which cannot include $T_i$, since $T_i$ has not completed yet. The \router then deterministically replays the committed log up to the current head (\algline{alg:persistence}{line:persist-replay-promote}), after which it promotes the replacement worker $W'$ and shuts down the rest. It is possible that \worker crashes interrupt the recovery process, but by the stability assumption (\S\ref{sec:assumptions}), there will eventually be a long enough failure-free interval to complete the recovery and finish executing $T_i$.
\end{proof}

\end{document}